\title{Simple Approximation Algorithms\\ for Minimizing the Total Weighted Completion Time of Precedence-Constrained Jobs}
\author{Sven~Jäger\thanks{\href{mailto:sven.jaeger@math.rptu.de}{sven.jaeger@math.rptu.de}, RPTU Kaiserslautern-Landau, Paul-Ehrlich-Straße 14, 67663 Kaiserslautern} \and Philipp~Warode\thanks{\href{mailto:philipp.warode@hu-berlin.de}{philipp.warode@hu-berlin.de}, Humboldt-Universität zu Berlin, Unter den Linden 6, 10099 Berlin}}
\definecolor{Red}{HTML}{DB001F}
\definecolor{Green}{HTML}{238224}
\definecolor{Blue}{HTML}{3840FE}
\definecolor{Purple}{HTML}{730087}
\newcommand{\purplename}{purple}
\colorlet{firstcolor}{Red}
\definecolor{secondcolor}{HTML}{F68C21}
\definecolor{thirdcolor}{HTML}{FFED00}
\colorlet{fourthcolor}{Green}
\colorlet{fifthcolor}{Blue}
\colorlet{sixthcolor}{Purple}
\colorlet{availablecolor}{secondcolor}
\let\availablecolorname\secondcolorname
\tikzstyle{job}=[draw=black, fill opacity=.5, text=black, text opacity=1]
\crefname{equation}{equation}{equations}
\newtheorem{theorem}{Theorem}[section]
\newtheorem{corollary}[theorem]{Corollary}
\newtheorem{lemma}[theorem]{Lemma}
\theoremstyle{remark}
\newtheorem{example}{Example}
\definecolor{colorsven}{RGB}{119,182,186}
\definecolor{colorphilipp}{RGB}{200,66,66}
\newcommand{\opt}{\ensuremath{\mathrm{OPT}}}
\newcommand{\alg}{\ensuremath{\mathrm{ALG}}}
\DeclareMathOperator*{\argmin}{arg\,min}
\let\oldparagraph\paragraph
\renewcommand{\paragraph}[1]{\oldparagraph{#1.}}
\begin{document}

\maketitle

\begin{abstract}
 We consider the precedence-constrained scheduling problem to minimize the total weighted completion time. For a single machine several $2$-approximation algorithms are known, which are based on linear programming and network flows. We show that the same ratio is achieved by a simple weighted round-robin rule. Moreover, for preemptive scheduling on identical parallel machines, we give a strongly polynomial $3$-approximation, which computes processing rates by solving a sequence of parametric flow problems. This matches the best known constant performance guarantee, previously attained only by a weakly polynomial LP-based algorithm. Our algorithms are both also applicable in non-clairvoyant scheduling, where processing times are initially unknown. In this setting, our performance guarantees improve upon the best competitive ratio of $8$ known so far.
\end{abstract}

\section{Introduction}

\FloatBarrier

Scheduling jobs with precedence constraints so as to minimize the sum of weighted completion times is a widely studied problem in the field of approximation and online algorithms. We consider this problem on a single machine, as well as on identical parallel machines when preemption is allowed. These problems, denoted in the $3$-field notation for scheduling problems~\cite{GLLK79} as $1\mid\text{prec}\mid\sum w_j C_j$ and $\mathrm P\mid \mathrm{prec}, \mathrm{pmtn} \mid \sum w_j C_j$, respectively, are both strongly NP-hard~\cite{Law78}. We present simpler and faster approximation algorithms for both problems, achieving the same performance guarantees as the best previously known algorithms.

\Textcite{Pis92} observed in 1992 that the single-machine scheduling problem is a special case of the so-called submodular ordering problem, for which he gave a $2$-approximation algorithm and thus the first $2$-approximation for the scheduling problem. Since then, a whole set of further $2$-approximation algorithms have been developed for this problem, and under a variant of the Unique Games Conjecture no better guarantee is possible~\cite{BK09}. The algorithm of \textcite{HSSW97} is based on a linear programming relaxation with an exponential number of efficiently separable constraints~\cite{Que93} and hence relies on the ellipsoid method. %\sven{\Citeauthor{Pis92} zitiert auch die Ellipsoid-Methode und \citeauthor*{HSSW97} schreiben etwas von Submodularität. Sind die Algorithmen gleich/verwandt?} 
All approximation algorithms developed in the sequel are purely combinatorial and based on network flows. \Textcite{CH99} consider a special LP relaxation with only two variables in each constraint, which can be solved to optimality by a minimum capacity cut computation~\cite{HMNT93}, and then apply list scheduling to the resulting LP completion times. \Textcite{CM99,MQW03,Pis03} all determine a Sidney decomposition~\cite{Sid75} and order the jobs arbitrarily within each block of the decomposition. The flow-based methods for computing the Sidney decomposition differ slightly in the three algorithms. While \Textcite{CM99,Pis03} solve a separate (parametric) maximum flow problem for each Sidney block, \citeauthor{MQW03} manage to compute the entire decomposition by a single application of the algorithm by \textcite{GGT89} for computing all breakpoints of a parametric maximum flow problem. This yields the to date fastest $2$-approximation algorithm for the problem, running in time~$O(n^3)$ for $n$ jobs. As to simplicity, the algorithm of \textcite{Pis03}, relying only on \citeauthor{Sid75}'s result and a standard maximum flow computation, may be considered the easiest. For identical parallel machines, \textcite{HSSW97} gave a $3$-approximation algorithm, which is still the best known constant performance guarantee for this problem. Their algorithm is based on the solution of a linear programming relaxation and thus runs only in weakly polynomial time.

We show that a very simple weighted round-robin rule, which always alternates between all available jobs, also achieves the performance guarantee~$2$ on a single machine. Here we call a job available if it is unfinished but all its predecessors have been completed. The algorithm passes the weight of each job waiting for an uncompleted predecessor to an arbitrary available predecessor. It then runs each available job at a rate proportional to its collected weight, thus constructing a schedule with infinitesimally small processing intervals. Note that this algorithm is non-clairvoyant, i.e., it needs no a priori knowledge of the processing times. When the processing times are known upfront, it can easily be transformed to a non-preemptive algorithm by scheduling the jobs in the order of completion times in the computed preemptive schedule. The entire algorithm runs in time~$O(n^2)$.

For identical parallel machines we present a $3$-competitive non-clairvoyant algorithm, which simplifies an algorithm by \textcite{GGKS19} and is based on a parametric maximum flow computation. It generalizes our algorithm for a single machine in the sense that, when applied to a single machine, it will return the same result. However, to accommodate the greater generality, it is more complicated than the single-machine algorithm, resulting in a longer running time. Similarly as above, the resulting schedule can be transformed into a schedule with only a finite number of preemptions. This gives the first strongly polynomial $3$-approximation algorithm for this problem.

Our contribution is thus twofold: First, we obtain simpler and faster approximation algorithms for clairvoyant precedence-constrained scheduling, see \cref{tab1}.
\begin{table}
 \caption{Running times and main ingredients of old and new approximation algorithms for precedence-constrained scheduling}
 \label{tab1}
 \centering

 \begin{tabular}{ccc}
  \toprule
  & \multicolumn{1}{c}{old} & new \\\midrule
  $2$-approximation for & $O(n^3)$  \cite{MQW03} & $O(n^2)$  \\
  $1\mid \text{prec} \mid \sum w_j C_j$ & (parametric flows) & (weighted round-robin) \\\midrule
  $3$-approximation for & weakly polynomial \cite{HSSW97} & $O(n^4)$ \\
  $\mathrm P \mid \text{prec}, \text{pmtn} \mid \sum w_j C_j$ & (LP-based) & (parametric flows) \\\bottomrule
 \end{tabular}
\end{table}%
Second, we get improved competitive ratios for non-clairvoyant scheduling with precedence constraints. Our non-clairvoyant algorithms for a single machine and for identical parallel machines generalize respective algorithms by \textcite{LLMS23} for out-forest precedence constraints, which were shown to be $4$-competitive and $6$-competitive, respectively. For general precedence constraints, the algorithm by \textcite{GGKS19} was improved by \textcite{Jaeg21} to give $8$-competitiveness, which has previously been the best known bound for this problem. In the absence of precedence constraints, $2$-competitiveness was proved by \textcite{KC03,BBEM12}, and no non-clairvoyant algorithm can achieve a better competitive ratio~\cite{MPT94}. \Cref{tab2} provides an overview of the old and new performance guarantees for non-clairvoyant scheduling.
\begin{table}
 \caption{Old and new competitiveness results for non-clairvoyant scheduling}
 \label{tab2}
 \centering
 \begin{tabular}{clc}
  \toprule
  & \multicolumn{1}{c}{old} & new \\\midrule
  $1\mid \text{pmtn} \mid \sum w_j C_j$ & $2$ \cite{KC03} \\
  $1 \mid \text{out-forest}, \text{pmtn} \mid \sum w_j C_j$ & $4$ \cite{LLMS23} & $2$ \\
  $1 \mid \text{prec}, \text{pmtn} \mid \sum w_j C_j$ & $8$ \cite{Jaeg21} & $2$ \\\midrule
  $\mathrm P \mid \text{pmtn} \mid \sum w_j C_j$ & $2$ \cite{BBEM12} \\
  $\mathrm P \mid \text{out-forest}, \text{pmtn} \mid \sum w_j C_j$ & $6$ \cite{LLMS23} & $3$ \\
  $\mathrm P \mid \text{prec}, \text{pmtn} \mid \sum w_j C_j$ & $8$ \cite{Jaeg21} & $3$ \\\bottomrule
 \end{tabular}
\end{table}

Not only the presented algorithms are simple, but also their analysis. The performance guarantee is derived in each case by means of an induction over the number of jobs, inspired by the analysis of \textcite{BBEM12}. For a single machine, it is fully self-contained, using only elementary calculations and not relying on any theoretical foundations like linear programming theory or network flows, so it could be taught in an introductory algorithms course. For identical parallel machines, only network flow theory is needed. This is in contrast to previous analyses for non-clairvoyant scheduling with precedence constraints~\cite{GGKS19,LLMS23}, which were based on dual fitting for an appropriate LP relaxation.

\paragraph{Furhter related results}

For some special classes of precedence graphs the single-machine problem can be solved in polynomial time. A classical result states that this is the case for series-parallel graphs~\cite{Knu,Law78}. This was generalized by \textcite{AM09} to $2$-dimensional precedence orders, by proving that the scheduling problem is a special case of the vertex cover problem. A survey on the complexity of scheduling problems with special precedence constraints is given by \textcite{PB18}. A generalization of the result of \citeauthor{AM09} yields approximation factors below $2$ for several further precedence graph structures: \textcite{AMMS11} showed that if the precedence order is $k$-dimensional (and the $k$ realizing linear orders are given), then there is a $(2-2/k)$-approximation algorithm. This implies a $4/3$-approximation algorithm for convex bipartite orders and for semiorders. Moreover, by generalizing to fractional dimensions, they proved for any $\Delta \ge 1$ that if no job has more than $\Delta$ predecessors or no job has more than $\Delta$ successors, then there is a $2/(1+1/\Delta)$-approximation. \Textcite{Sit21} recently gave a polynomial-time approximation scheme for interval orders.

For integral processing times, preemptive scheduling with precedence constraints is closely related to non-preemptive scheduling of precedence-constrained jobs with unit processing times. On the one hand, given an instance with arbitrary processing times, every job can be replaced by a chain of $p_j$ unit jobs such that the entire weight lies on the last job, see e.g.~\cite{BK99}. In the resulting unit processing time instance, preemptions are not useful. Hence, any approximate solution for non-preemptive scheduling of the new instance corresponds to a preemptive schedule for the original instance with the same performance guarantee. This reduction is, however, only pseudopolynomial. On the other hand, \citeauthor*{HSSW97}'s algorithm computes a schedule where preemptions occur only at integer times. Hence, when applied to jobs with unit processing times, it will not introduce any preemptions. Therefore, it is a $3$-approximation algorithm also for $\mathrm P \mid \text{prec}, p_j=1\mid \sum w_j C_j$. For this problem a better $(1+\sqrt 2)$-approximation was recently developed by \textcite{Li20}, which implies a pseudopolynomial approximation with this guarantee for preemptive scheduling of general jobs.

For non-preemptive scheduling of precedence-constrained jobs with general processing times on identical parallel machines the currently best known performance guarantee is $2+2\ln 2 + \varepsilon \approx 3.386 + \varepsilon$ for any $\varepsilon > 0$, achieved by another algorithm of \textcite{Li20}.

Makespan minimization is a special case of our problem because this objective function can be modeled by adding one dummy job that has to wait for every other job. For this case Graham's list scheduling algorithm has performance guarantee~$2$~\cite{Gra66}, and the $(2-\varepsilon)$-hardness under the Unique Games Conjecture variant still applies~\cite{Sve11}.

The algorithms presented in this paper only work when all jobs are released at the beginning and can only be blocked due to unfinished predecessors. In contrast, the $3$-approximation algorithm of \textcite{HSSW97} for identical parallel machines can still be applied when each job has an individual release date, whereas for a single machine no known approximation algorithm with performance guarantee exactly $2$ can handle release dates. However, a $(2+\varepsilon)$-approximation was provided by \textcite{SY18}. For non-clairvoyant scheduling with release dates the algorithm of \textcite{Jaeg21} admits the currently best known performance guarantee of $8$.

Another generalization of the problem studied in this paper concerns the objective function: \Textcite{SV16} gave a universal $2$-approximation algorithm for $1 \mid \text{prec} \mid \sum w_j f(C_j)$ for all concave functions~$f$. Moreover, they showed that for any given concave functions~$f_j$ for all jobs~$j$, the problem $1\mid \text{prec} \mid \sum f_j(C_j)$ admits a $(2+\varepsilon)$-approximation.

In our non-clairvoyant algorithms, we always assume that the weights and the precedence graph are known precisely. When the precedence constraints form an out-forest, the knowledge of the entire graph is not necessary, but it would be enough to know only the total weight of jobs depending on each available job. \Textcite{LLMS23} study the case when these aggregated weights are also not known exactly but only a prediction on their value is at hand. Using time sharing, they derive an algorithm whose performance guarantee depends on the maximum distortion factor between the predicted weights and the real weights, but is capped at the width of the precedence order.

\paragraph{Structure of the paper}

Some basics and notation are set up in \cref{preliminaries}. In \cref{non-clairvoyant single,non-clairvoyant parallel} the non-clairvoyant algorithms for a single and for identical parallel machines, respectively, are presented. It is discussed in \cref{approximations} how these algorithms can be converted to approximation algorithms for the problems $1\mid\text{prec}\mid\sum w_j C_j$ and $\mathrm P\mid\text{prec}, \text{pmtn}\mid\sum w_j C_j$.

\section{Preliminaries and Notation} \label{preliminaries}

Assume that we are given a set of jobs~$N = \{1,\dotsc,n\}$ with processing times~$p_j \ge 0$, weights~$w_j \ge 0$, and precedence constraints~$A \subset N \times N$ such that $(N, A)$ is a directed acyclic graph. A job~$k$ is \emph{available} if it is not yet completed but all jobs~$k$ with $(k,j) \in A$ are.
A schedule~$\mathrm S$ assigns to each available job a processing rate~$R_j^{\mathrm S}(t) \in [0,1]$ at any time~$t \ge 0$ so that the sum of all processing rates never exceeds~$1$. The processing time of a job~$j$ elapsed before time~$t$ is $Y_j^{\mathrm S}(t) = \int_0^t R_j^{\mathrm S}(s) \,\mathrm d s$. A job~$j$ is completed at the time~$C_j^{\mathrm S} = \min \{t \ge 0 \mid Y_j(t) \ge p_j\}$, at which its elapsed time reaches its required processing time. The goal is to minimize the total weighted completion time~$\sum_{j=1}^n w_j C_j^{\mathrm S}$ of all jobs. We omit the schedule~$\mathrm S$ in these notations if it is clear from context or when the schedule is in the process of being constructed. The performance of an algorithm is assessed by comparing the achieved objective value to the objective value of an optimal schedule with full information in the worst case.

For $j \in N$ let $S(j)$ be the set containing $j$ and all its successors in the precedence graph, i.e., all nodes reachable from node~$j$. Moreover, for $t \ge 0$ let $U_t \coloneqq \{j \in N \mid C_j > t\}$ be the set of unfinished jobs, and let $F_t \subseteq U_t$ be the set of available jobs at time~$t$. For a subset $J \subseteq N$ of jobs we will write $w(J) \coloneqq \sum_{j \in J} w_j$.

In non-clairvoyant scheduling the weights and precedence constraints of all jobs are known at the beginning, but the processing times are unknown and are only revealed at the moment when the jobs are completed.

\section{Non-clairvoyant Scheduling on a Single Machine} \label{non-clairvoyant single}

In non-clairvoyant scheduling one specifies a tentative schedule, which can be updated whenever a job is completed. The tentative schedules occurring in our non-clairvoyant algorithm always process every job at a constant rate. These rates are computed by \cref{alg1},
\begin{algorithm}[h]
 \caption{Processing rates at time~$t$ on a single machine}
 \label{alg1}
 \begin{algorithmic}
  \State Let $U' \gets U_t$.
  \LComment{Compute depth-first search forest}
  \ForAll{$i \in F_t$}
   \State set $T(i) \gets S(i) \cap U'$;
   \State set $U' \gets U' \setminus T(i)$.
  \EndFor
  \LComment{Set processing rates}
  \State Process each job~$i \in F_t$ at rate $R_i(t) \gets \frac{\sum_{j \in T(i)} w_k}{\sum_{j \in U_t} w_j}$.
 \end{algorithmic}
\end{algorithm}
which iterates over all available jobs and assigns to every job all of its yet unassigned successors. Then, each available job is processed at a rate proportional to the total weight of its assigned jobs.

\begin{example} \label{exa:single machine non-clairvoyant}
 Consider four jobs with the following weights and processing times
 \begin{center}
  \begin{tabular}{r|rrrr}
   $j$ & $1$ & $2$ & $3$ & $4$ \\\hline
   $w_j$ & $1$ & $2$ & $1$ & $1$ \\
   $p_j$ & $6$ & $4$ & $3$ & $5$
  \end{tabular}
 \end{center}
 and assume that there is a single precedence constraint from job~$1$ to job~$2$. The schedule for this instance resulting from \cref{alg1} is depicted in \cref{fig:single machine non-clairvoyant}, and the resulting completion times are $C_1 = 10$, $C_2 = 17$, $C_3 = 14$, and $C_4 = 18$.
 \begin{figure}[h]
  \centering
  \begin{tikzpicture}[xscale=.7, yscale=1.2]
   \fill[fill=firstcolor, job] (0,.4) rectangle node {$1$} (10,1);
   \fill[fill=thirdcolor, job] (0,.2) rectangle +(10,.2);
   \fill[fill=fourthcolor, job] (0,0) rectangle +(10,.2);
   \fill[fill=secondcolor, job] (10,.5) rectangle node {$2$} +(4,.5);
   \fill[fill=thirdcolor, job] (10,.25) rectangle node {$3$} +(4,.25);
   \fill[fill=fourthcolor, job] (10,0) rectangle node {$4$} +(4,.25);
   \fill[fill=secondcolor, job] (14,1/3) rectangle node {$2$} (17,1);
   \fill[fill=fourthcolor, job] (14,0) rectangle node {$4$} +(3,1/3);
   \fill[fill=fourthcolor, job] (17,0) rectangle node {$4$} +(1,1);
  \end{tikzpicture}
  \caption{Schedule for the instance in \cref{exa:single machine non-clairvoyant}}
  \label{fig:single machine non-clairvoyant}
 \end{figure}
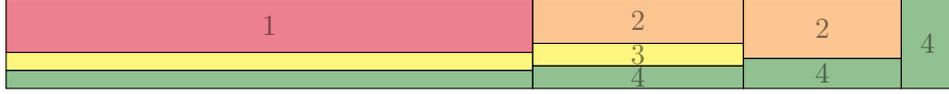
\end{example}

\begin{theorem} \label{thm:non-clrv-single}
 Using \cref{alg1} at time~$0$ and at the first $n-1$ job completion times to determine subsequent processing rates is a $2$-competitive strategy for total weighted completion time minimization.
\end{theorem}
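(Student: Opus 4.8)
The plan is to prove the bound by induction on the number of jobs~$n$, following the approach of Beaumont et al.~\cite{BBEM12}. Let $\alg$ denote the schedule produced by the algorithm and let $\opt$ be an optimal (offline) schedule with completion times~$C_j^*$. The key structural observation about \cref{alg1} is that at every time~$t$ the total processing rate over all available jobs equals~$\frac{\sum_{i\in F_t}\sum_{j\in T(i)}w_j}{w(U_t)} = \frac{w(U_t)}{w(U_t)} = 1$, since the sets~$T(i)$ over $i\in F_t$ partition~$U_t$ (every unfinished job has at least one available ancestor, so it lands in exactly one~$T(i)$). Thus the machine is fully utilized as long as any job remains, and the first completion time equals the total weighted... more precisely, the key quantity is the completion time~$C$ of the first job to finish: during~$[0,C]$ each available job~$i$ is processed at constant rate~$w(T(i))/w(N)$, and the job~$j^\star$ that finishes first does so at time~$C = p_{j^\star}\cdot w(N)/w(T(i^\star))$ where $i^\star$ is the available ancestor owning~$j^\star$. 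I would first establish these facts and fix notation.

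Next I would set up the inductive step. Removing the first-completed job~$j^\star$ from the instance yields an instance on~$n-1$ jobs; I need to argue that the behavior of \cref{alg1} on the remaining jobs from time~$C$ onward coincides (after shifting time by~$C$) with its behavior on the reduced instance started at time~$0$ — this requires checking that $j^\star$ is available when it completes (so its removal does not change availability of others beyond making its successors available, which is exactly what happens in the reduced instance) and that the depth-first ownership structure evolves consistently. This lets me write $\sum_j w_j C_j^{\alg} = w(N)\cdot C + \sum_{j\ne j^\star} w_j \bar C_j^{\alg}$ where $\bar C_j^{\alg}$ are the completion times in the reduced instance, and the inductive hypothesis bounds the latter sum by~$2\,\mathrm{OPT}'$ for the reduced instance. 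Separately, the term~$w(N)\cdot C$ is small: I would show $C = p_{j^\star}w(N)/w(T(i^\star)) \le 2\,p_{j^\star}$ is \emph{not} quite what is needed; instead the right comparison is that $w(N)\cdot C \le 2\sum_j w_j p_j$-type bound won't directly work either. The clean statement to aim for is a per-step inequality of the form $w(N)\cdot C + 2\,\mathrm{OPT}' \le 2\,\mathrm{OPT}$, i.e., that the optimal cost drops by at least~$\tfrac12 w(N)C$ when we pass to the reduced instance after the first completion.

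The heart of the argument — and the step I expect to be the main obstacle — is that last inequality relating~$\mathrm{OPT}$ to~$\mathrm{OPT}'$ and the first-completion quantity~$w(N)\cdot C$. A convenient route is to compare~$\alg$ not against a generic optimum but against the class of schedules, and to use that for \emph{any} schedule (in particular~$\opt$) one has $\sum_j w_j C_j \ge \frac12\big(w(N)\cdot C + \text{cost of the induced schedule on the reduced instance}\big)$ is too strong; rather, one argues $\sum_j w_j C_j^* \ge w(N)\cdot C/2$ is needed only as a base-case-style fact, combined with the observation that the restriction of~$\opt$ to~$N\setminus\{j^\star\}$ is feasible for the reduced instance, so $\mathrm{OPT}' \le \sum_{j\ne j^\star} w_j C_j^*$. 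Adding $w_{j^\star}C_{j^\star}^* \ge w_{j^\star}\cdot(\text{something})$ and bounding~$w(N)\cdot C$ by twice a lower bound on~$\opt$ valid because \emph{some} job must run for~$p_{j^\star}$ amount and the weighted volume argument gives $\sum_j w_j C_j^* \ge \sum_j w_j \min\{p_k : k \text{ an ancestor of } j \text{ or } j\}$-style bounds. I would make this precise by showing directly that $w(N)\cdot C \le 2\,\mathrm{OPT}$: indeed, in any schedule the weighted sum of completion times is at least the "area" argument $\sum_j w_j C_j^* \ge \tfrac12 \sum_j w_j p_j + \tfrac12 w(N)\cdot\min_j\{\text{chain lengths}\}$ — but the cleanest is probably to note that $C = p_{j^\star}\cdot w(N)/w(T(i^\star))$ and that in~$\opt$ the jobs in~$T(i^\star)$ all complete after~$C_{j^\star}^* \ge p_{j^\star}$ only if... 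This is where care is required; I would likely instead carry an auxiliary lemma: for every schedule, $\sum_j w_j C_j \ge \frac12 w(N)\cdot C + \frac12\,(\text{cost on reduced instance})$ might fail, so the robust plan is to use the volume/area lower bound $2\sum_j w_j C_j^* \ge \big(\sum_j w_j\big)\cdot\big(\text{min over available-chain processing}\big) + \dotsb$ and verify it dominates~$w(N)C$. Once the per-step inequality $w(N)\,C + 2\,\mathrm{OPT}' \le 2\,\mathrm{OPT}$ is in hand, the induction closes immediately, with the base case~$n=1$ trivial since then $\alg$ is optimal.
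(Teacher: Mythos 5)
Your skeleton matches the paper's proof (induction on $n$, peel off the first job completed by the algorithm, and close the induction with a per-step inequality of the form $\tfrac12 w(N)\,C + \opt' \le \opt$), but two essential pieces are missing or wrong. First, the reduced instance is not obtained by merely deleting~$j^\star$: it must consist of the remaining jobs with \emph{shortened} processing times $p_j' = p_j - Y_j(C)$, where $Y_j(C) = w(T(j))\,C/w(N)$ for the jobs $j \in F_0$ that ran during $[0,C]$. Without this, your claim that \cref{alg1} on the remaining jobs after time~$C$ coincides with its run on the reduced instance is false (the other available jobs have already been partially processed), and, worse, the per-step inequality itself fails: for $n$ unit jobs of weight~$1$ with no precedences, $C = n$, $w(N)C = n^2$, $\opt = n(n+1)/2$ and the unshortened $\opt' = n(n-1)/2$, so $w(N)C + 2\opt' = 2n^2 - n > n^2 + n = 2\opt$ for $n \ge 3$.

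Second, the step you yourself flag as the main obstacle — proving $\opt(I) \ge \opt(I') + \tfrac12 w(N)\,C$ — is exactly the heart of the paper's proof, and none of the routes you sketch (chain-length or area/volume lower bounds, or the two separate facts $w(N)C \le 2\opt$ and $\opt' \le \opt$, which only yield ratio~$4$) delivers it. The paper's argument is: take a fixed optimal schedule for $I$, shorten each job $k \in F_0$ by $Y_k(C)$, delete job~$1$, and contract idle time, obtaining a feasible schedule $\mathrm S'$ for the shortened instance $I'$ in which each completion time drops by $\sum_{k \in F_0:\, C_k^{\opt} \le C_j^{\opt}} Y_k(C)$; then redistribute the weight of every $j \in T(i)$ to its root $i \in F_0$, using $C_j^{\opt} \ge C_i^{\opt}$, and conclude with the ordered-pair estimate $\sum_{i \in F_0} w\bigl(T(i)\bigr) \sum_{k \in F_0:\, C_k^{\opt} \le C_i^{\opt}} w\bigl(T(k)\bigr) \ge \tfrac12\, w(N)^2$. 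Since this transformation-plus-pairing argument is absent from your proposal, the proof as it stands does not go through; supplying it (together with the corrected definition of $I'$) is what you still need to do.
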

\begin{proof}
 We prove the claim by induction on the number~$n$ of jobs. For a single job, the algorithm obviously computes the optimal schedule. Now consider an instance~$I$ with $n > 1$ jobs, and assume w.l.o.g.\ that $w(N) = 1$ and that job~$1$ is the first job completed in the schedule computed by the algorithm. We denote the the schedule resulting from \cref{alg1} applied to $I$ by $\alg(I)$, and we write $Y_j(t) \coloneqq Y_j^{\alg(I)}(t)$ for $t \ge 0$. Within $[0,C_1^{\alg(I)})$ every job~$i \in F_0$ is processed at a constant rate equal to the total weight of jobs in the tree~$T(i)$. Therefore, $C_1^{\alg(I)} = p_1 / w(T(1))$, and $Y_i(C_1^{\alg(I)}) = w(T(i)) C_1^{\alg(I)}$ for all $i \in F_0$. We define the instance~$I'$ that consists of the jobs $j=2,\dotsc,n$ with processing times $p_j' \coloneqq p_j - Y_j(C_1^{\alg(I)})$. \Cref{alg1} applied to $I$ behaves after time~$C_1^{\alg(I)}$ exactly as if it were applied to $I'$. Hence, $C_j^{\alg(I)} = C_1^{\alg(I)} + C_j^{\alg(I')}$ for all $j \in \{2,\dotsc,n\}$. Therefore,
 \[\sum_{j=1}^n w_j C_j^{\alg(I)} = \sum_{j=1}^n w_j C_1^{\alg(I)} + \sum_{j=2}^n w_j C_j^{\alg(I')} = C_1^{\alg(I)} + \sum_{j=2}^n w_j C_j^{\alg(I')}.\]
 Now consider a fixed non-preemptive optimal schedule~$\opt(I)$ for $I$. By shortening jobs in $\opt(I)$, removing job~$1$, and contracting any occurring idle times, we can transform this to a feasible schedule~$\mathrm S'$ for $I'$.
 Since we contract idle times, the starting time (and, thus, also completion time) of each job~$j$ is lowered by the total time by which all preceding jobs have been shortened.
 By construction of $I'$, only the processing time of jobs $k \in F_0$ is shortened by exactly $Y_k(C_1^{\alg(I)})$. Hence, $C_j^{\opt(I)} - C_j^{\mathrm S'} = \sum_{k \in F_0: C_k^{\opt(I)} \le C_j^{\opt(I)}} Y_k(C_1^{\alg(I)})$ for $j = 2, \dotsc, n$.
 Overall, we obtain
 \begin{align*}
  \sum_{j=1}^n w_j C_j^{\opt(I)} &= w_1 C_1^{\opt(I)} + \sum_{j=2}^n w_j (C_j^{\mathrm S'} + C_j^{\opt(I)} - C_j^{\mathrm S'}) \\
  &\ge \sum_{j=2}^n w_j C_j^{\mathrm S'} + \sum_{j=1}^n w_j \sum_{\substack{k \in F_0 \\ C_k^{\opt(I)} \le C_j^{\opt(I)}}} Y_k(C_1^{\alg(I)}) \\
  &= \sum_{j=2}^n w_j C_j^{\mathrm S'} + \sum_{j=1}^n w_j \sum_{\substack{k \in F_0 \\ C_k^{\opt(I)} \le C_j^{\opt(I)}}} w\bigl(T(k)\bigr) \, C_1^{\alg(I)} \\
  &= \sum_{j=2}^n w_j C_j^{\mathrm S'} + C_1^{\alg(I)} \sum_{i \in F_0} \sum_{j \in T(i)} w_j \sum_{\substack{k \in F_0 \\ C_k^{\opt(I)} \le C_j^{\opt(I)}}} w\bigl(T(k)\bigr) \\
  &\ge \sum_{j=2}^n w_j C_j^{\mathrm S'} + C_1^{\alg(I)} \sum_{i \in F_0} w\bigl(T(i)\bigr) \sum_{\substack{k \in F_0 \\ C_k^{\opt(I)} \le C_i^{\opt(I)}}} w\bigl(T(k)\bigr) \\
  &\ge \sum_{j=2}^n w_j C_j^{\mathrm S'} + C_1^{\alg(I)} \cdot \frac 1 2 \bigg(\sum_{j=1}^n w_j \bigg)^{\!2} \\
  &= \sum_{j=2}^n w_j C_j^{\mathrm S'} + C_1^{\alg(I)} \cdot \frac 1 2 \ge \sum_{j=2}^n w_j C_j^{\opt(I')} + \frac{C_1^{\alg(I)}}{2},
 \end{align*}
 where the second inequality holds because $C_j^{\opt(I)} \ge C_i^{\opt(I)}$ for all $i \in F_0$ and $j \in T(i)$. Thus, we have by induction hypothesis:
 \begin{align*}
  \sum_{j=1}^n w_j C_j^{\alg(I)} &= C_1^{\alg(I)} + \sum_{j=2}^n w_j C_j^{\alg(I')} \\
  &\le 2 \biggl(\frac{C_1^{\alg(I)}}{2} + \sum_{j=2}^n w_j C_j^{\opt(I')} \biggr) \le 2 \sum_{j=1}^n w_j C_j^{\opt(I)}. \qedhere
 \end{align*}
\end{proof}

\section{Non-clairvoyant Scheduling on Identical Parallel Machines} \label{non-clairvoyant parallel}

The difficulty on multiple identical machines is that the total processing power of $m$ cannot be arbitrarily divided among jobs because no job can be processed by more than one machine at the same time. If we imagine that unfinished unavailable jobs pass their weight to their available predecessors, which they can use to ``buy'' processor rate, then on a single machine, they can pass their weight to an arbitrary predecessor because the donated weight will in any case increase the processing rate of this predecessor. This is implemented in \cref{alg1} by simply passing the weight to the available predecessor considered first. Another simple case is when the precedence graph is an out-forest. In this case, every unfinished unavailable job passes its weight to its unique available predecessor, and the rate assignment to the available jobs follows the so-called WDEQ algorithm applied to the collected weights, i.e., as long as there is a job that should receive a rate larger than~$1$, this job receives rate~$1$ and the algorithm recurses on the remaining jobs and machines. This amounts exactly to Algorithm~5 from \textcite{LLMS23}.

In the case that we have multiple machines and arbitrary precedence constraints, the situation is more complicated. For a given unavailable job~$j$, some predecessors may have already collected enough weight to receive an entire processor, while others would still benefit from receiving more weight. So $j$ better passes its weight to the latter in order to reduce its waiting time. It may also be beneficial to split the weight among multiple predecessors. Inspired by an algorithm by \textcite{GGKS19} for the more general setting with online release dates, we model this as a parametric maximum flow problem. For this we adopt a slightly different view than above. We assume that all unfinished jobs can buy ``virtual'' processing rate. If a job is unavailable, the bought virtual rate has to be sent to it via an available predecessor, and the total rate sent through each available job is at most~$1$. This corresponds to a resource allocation problem in a capacitated network. If $\tilde R_j$ is the virtual rate sent to node~$j$, the goal is to balance the ratios $\tilde R_j/w_j$. Since at the end of the day, we are not interested in the exact virtual rates of the unavailable jobs, but only in the total amount sent through each available job, it suffices to minimize the maximum ratio~$\tilde R_j/w_j$ over all unfinished jobs~$j$. Exactly this problem was studied by \citeauthor{GGT89}~\cite[Section~4.1]{GGT89}, who show that this Minimax flow sharing problem can be solved by computing a parametric maximum flow in an extended network. In the following we describe a variant of their construction for our specific case.

Let $t \ge 0$. We define a directed graph~$D_t = (\mathcal V_t, \mathcal A_t)$ as follows: The nodes are $\mathcal V_t \coloneqq U_t \cup \{\mathrm A, \mathrm B, \mathrm Z\}$, and the set of arcs is
\[\mathcal A_t \coloneqq \bigl\{(j,k) \in A \bigm| j,k \in U_t\bigr\} \cup \bigl\{(\mathrm A,\mathrm B)\bigr\} \cup \bigl\{(\mathrm B,j) \bigm| j \in F_t\bigr\} \cup \bigl\{(j,\mathrm Z) \bigm| j \in U_t\bigr\}.\]
We define arc capacities $u_a^t(\pi)$, $a \in \mathcal A_t$, depending on a parameter $\pi > 0$, as follows: We set $u^t_{(\mathrm A, \mathrm B)}(\pi) \coloneqq m$, $u^t_{(\mathrm B, j)}(\pi) \coloneqq 1$ for $j \in F_t$, $u^t_{(j,k)}(\pi) \coloneqq \infty$ for $j,k \in U_t$ with $(j,k) \in A$, and\linebreak $u^t_{(j,\mathrm Z)}(\pi) \coloneqq w_j/\pi$ for $j \in U_t$. This notation allows us to formulate the rate distribution in \cref{alg2}.
\begin{algorithm}
 \caption{Processing rates at time~$t$ on identical parallel machines}
 \label{alg2}
 \begin{algorithmic}
  \If{$\lvert F_t \rvert \leq m$,}
   \State set $R_j(t) \gets 1$ for all $j \in F_t$;
  \Else
   \State compute $\pi_t \gets \max\bigl\{\pi > 0 \bigm| (\{\mathrm A\}, \mathcal V_t \setminus \{\mathrm A\}) \text{ is a minimum-capacity }\mathrm A\text-\mathrm Z\text{-cut w.r.t.\ } u^t(\pi)\bigr\}$,
   \Statex\hspace{\algorithmicindent}and let $x^t$ be a maximum $\mathrm A$-$\mathrm Z$-flow for arc capacities~$u^{t} \coloneqq u^t(\pi_t)$;
   \State set $R_j(t) \gets x^t_{(\mathrm B,j)}$ for all $j \in F_t$. %
  \EndIf
 \end{algorithmic}
\end{algorithm}

\begin{example} \label{exa:non-clairvoyant}
 Assume that there are $m=3$ machines and the following unfinished jobs at time~$t = 0$:
 \begin{center}
  \begin{tabular}{r|*6c}
   $j$ & $1$ & $2$ & $3$ & $4$ & $5$ & $6$ \\\hline
   $p_j$ & $9$ & $9$ & $12$ & $12$ & $9$ & $3$ \\
   $w_j$ & $1$ & $1$ & $1$ & $6$ & $5$ & $1$
  \end{tabular}
 \end{center}
 Assume further that there are the precedence constraints~$\mathcal A_t = \{(1,5),\allowbreak (2,5),\allowbreak (2,6),\allowbreak (3,6),\allowbreak (4,6)\}$. Then the available jobs are $F_t = \{1,2,3,4\}$, so that $\lvert F_t \rvert > m$. \Cref{alg2} considers the directed graph shown in \cref{fig:example non-clairvoyant},
 \begin{figure}[tb]
  \centering
  \begin{tikzpicture}[scale=1.75]
   \draw[Green,very thick] (-2.5,1.5) -- (-2.5,-1.5);
   \draw[Green,very thick] (-1,-1.5) .. controls (1,-0.5) and (3.5,-2) .. (3.75,1.5);
   \node[circle,inner sep=0,minimum size=12pt,draw] (A) at (-3,0) {$\mathrm A$};
   \node[circle,inner sep=0,minimum size=12pt,draw] (B) at (-2,0) {$\mathrm B$};
   \node[circle,inner sep=0,minimum size=12pt,draw] (Z) at (4,0) {$\mathrm Z$};
   \node[rectangle,white,fill=availablecolor,label={[availablecolor]$1$},label={below:$1$}] (1) at (0,1.5) {$1$};
   \node[rectangle,white,fill=availablecolor,label={[availablecolor]$\nicefrac 2 3$},label={below:$1$}] (2) at (0,0.5) {$2$};
   \node[rectangle,white,fill=availablecolor,label={[availablecolor]$\nicefrac 1 3$},label={below:$1$}] (3) at (0,-0.5) {$3$};
   \node[rectangle,white,fill=availablecolor,label={below:$6$}] (4) at (0,-1.5) {$4$};
   \node[availablecolor,fill=white,fill opacity=0.7,above] at (4.north) {$1$};
   \node[rectangle,white,fill=Purple,label={below:$5$}] (5) at (2,1.5) {$5$};
   \node[rectangle,white,fill=Purple,label={$1$}] (6) at (2,-0.5) {$6$};
   \draw[thick,->] (A) -- node[above,fill=white,fill opacity=0.7] {$\textcolor{Blue}{3} \mid \textcolor{Red}{3}$} (B);
   \draw[thick,->] (B) -- node[sloped,above] {$\textcolor{Blue}{1} \mid \textcolor{Red}{1}$} (1);
   \draw[thick,->] (B) -- node[sloped,above,pos=0.66] {$\textcolor{Blue}{\nicefrac{2}{3}} \mid \textcolor{Red}{1}$} (2);
   \draw[thick,->] (B) -- node[sloped,below, pos=0.66] {$\textcolor{Blue}{\nicefrac 1 3} \mid \textcolor{Red}{1}$} (3);
   \draw[thick,->] (B) -- node[sloped,below] {$\textcolor{Blue}{1} \mid \textcolor{Red}{1}$} (4);
   \draw[thick,->] (1) -- node[Blue,sloped,below] {$\nicefrac{7}{9}$} (5);
   \draw[thick,->] (2) -- node[Blue,sloped,below] {$\nicefrac{1}{3}$} (5);
   \draw[thick,->] (2) -- node[Blue,sloped,above] {$\nicefrac{1}{9}$} (6);
   \draw[thick,->] (3) -- node[Blue,sloped,above] {$\nicefrac{1}{9}$} (6);
   \draw[thick,->] (1) to[bend left=45,looseness=1.2] node[above,sloped] {$\textcolor{Blue}{\nicefrac 2 9} \mid \textcolor{Red}{\nicefrac 2 9}$} (Z);
   \draw[thick,->] (2) -- node[above,sloped] {$\textcolor{Blue}{\nicefrac 2 9} \mid \textcolor{Red}{\nicefrac 2 9}$} (Z);
   \draw[thick,->] (3) to[bend right] node[below,sloped,pos=0.6] {$\textcolor{Blue}{\nicefrac 2 9} \mid \textcolor{Red}{\nicefrac 2 9}$} (Z);
   \draw[thick,->] (5) -- node[below,sloped] {$\textcolor{Blue}{\nicefrac{10}{9}} \mid \textcolor{Red}{\nicefrac{10}{9}}$} (Z);
   \draw[thick,->] (6) -- node[above,sloped,pos=.4] {$\textcolor{Blue}{\nicefrac 2 9} \mid \textcolor{Red}{\nicefrac 2 9}$} (Z);
   \draw[thick,->] (4) to[bend right] node[below,sloped] {$\textcolor{Blue}{1} \mid \textcolor{Red}{\nicefrac 4 3}$} (Z);
   \draw[thick,->] (4) -- node[sloped,below,Blue,fill=white,fill opacity=0.7] {$0$} (6);
  \end{tikzpicture}
  \caption{Network~$D_t$ used by \cref{alg2} for the instance from \cref{exa:non-clairvoyant} with \textcolor{Red}{capacities}, \textcolor{Blue}{flows}, \textcolor{Green}{minimum capacity cuts}, and \textcolor{availablecolor}{processing rates} resulting from the parametric maximum flow computation.}\label{fig:example non-clairvoyant}
 \end{figure}
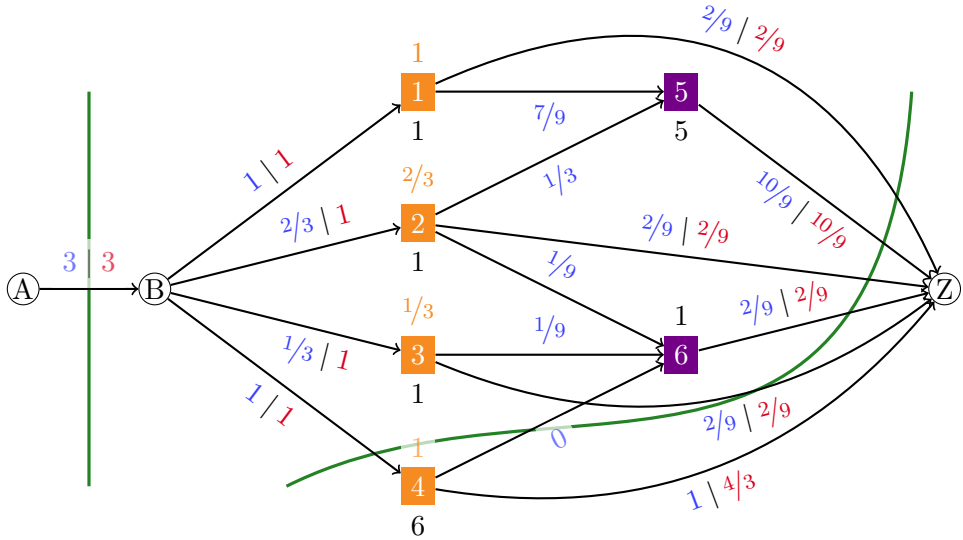
 where the available jobs are colored \availablecolorname, and the unavailable jobs are colored \purplename. The black labels next to the nodes indicate the job weights. Using a parametric flow computation, the algorithm computes $\pi_t = \frac 9 2$, resulting in the arc capacities given in \cref{fig:example non-clairvoyant} in red, as well as the maximum flow~$x^t$ indicated in blue. The rates of the available jobs are set to the incoming flow values and are represented in \availablecolorname{} in the \lcnamecref{fig:example non-clairvoyant}. The two minimum-capacity cuts are shown in green. Note that some of the flow values and rates are not unique. For example, it would also be possible for more flow from node~$\mathrm B$ to node~$5$ to take the route via node~$2$ instead of $1$, resulting in a shift of processing rate from job~$1$ to job~$2$.

 \begin{figure}[h]
  \centering
  \begin{tikzpicture}[xscale=.7, yscale=1.2/2]
  \fill[fill=firstcolor, job] (0,2) rectangle node {$1$} (9,3);
  \fill[fill=secondcolor, job] (0,4/3) rectangle (9,2);
  \fill[fill=thirdcolor, job] (0,1) rectangle (9,4/3);
  \fill[fill=fourthcolor, job] (0,0) rectangle node {$4$} (9,1);
  \fill[fill=secondcolor, job] (9,2) rectangle node {$2$} (12,3);
  \fill[fill=thirdcolor, job] (9,1) rectangle node {$3$} (12,2);
  \fill[fill=fourthcolor, job] (9,0) rectangle node {$4$} (12,1);
  \fill[fill=thirdcolor, job] (12,2) rectangle node {$3$} (18,3);
  \fill[fill=fifthcolor, job] (12,1) rectangle node {$5$} (18,2);
  \fill[fill=fifthcolor, job] (18,2) rectangle node {$5$} (21,3);
  \fill[fill=sixthcolor, job] (18,1) rectangle node {$6$} (21,2);
  \end{tikzpicture}
  \caption{Schedule for the instance in \cref{exa:non-clairvoyant}.}
  \label{fig:virtual schedule non-clairvoyant}
 \end{figure}
 When jobs are scheduled with the rates according to \cref{fig:example non-clairvoyant}, job~$1$ is completed first at time $C_1 = 9$. In all subsequent iterations we always have $|F_t| \leq m$. Therefore, after the completion of job~$1$, all jobs are processed with rate~$1$ leading to the schedule depicted in \cref{fig:virtual schedule non-clairvoyant} with the completion times $C_2 = C_4 = 12$, $C_3 = 18$ and $C_5 = C_6 = 21$.
\end{example}

The parametric minimum cut computation in \cref{alg2} can be carried out with the strongly polynomial algorithm by
\Textcite{GGT89}. This algorithm can be applied to instances with capacities that depend linearly on a parameter~$\lambda$ and computes the minimum cut capacity function. Setting $\lambda \coloneqq 1 / \pi$, the Minimax solution\footnote{Note that in~\cite{GGT89} the algorithms solving the Minimax and the similar Maximin sharing problems are interchanged.} can be obtained as the largest breakpoint~$\lambda_{\max}$ of this function.% As \citeauthor{GGT89} show, this can be computed in strongly polynomial time.

The resulting $\pi_t = 1 / \lambda_{\max}$ can be interpreted as the price of processing rate in the following market: A total amount of $m$~units of a single divisible good are sold to $|U_t|$~buyers $j \in U_t$ with budgets~$w_j$. The paths from $\mathrm{B}$ to the nodes $j \in U_t$ represent different possible routes on which the good can be delivered from the supplier to the respective buyers, where the amount of good sent on any arc $(\mathrm{B}, i)$, $i \in F_t$, must not exceed~$1$. The price~$\pi_t$ is the largest possible price for the good so that all units of the good will be sold, and any corresponding maximum flow corresponds to a possible allocation to the buyers. We can alternatively interpret the price of the good as the price for transporting one unit along the arc~$(\mathrm A, \mathrm B)$. \Textcite{JV10} considered the problem where prices for all arcs of a digraph have to be determined, which they solve by similar flow-based methods.

\begin{theorem} \label{thm:non-clrv-P}
 Using \cref{alg2} at time~$0$ and at the first $n-1$ job completion times to determine subsequent processing rates is a $3$-competitive strategy for total weighted completion time minimization.\end{theorem}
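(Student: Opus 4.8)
The plan is to argue by induction on the number~$n$ of jobs, following the proof of \cref{thm:non-clrv-single}, with two changes forced by the parallel setting: the single-machine load bound becomes ``total processing done before time~$t$ is at most~$mt$'', and the identity ``rate of~$i$ equals the weight of the DFS tree below~$i$'', which holds only on one machine, is replaced by the flow relations defining the rates of \cref{alg2}. For $n=1$ the algorithm is optimal. For $n>1$ I would normalize $w(N)=1$ and let job~$1$ be the first job completed in $\alg(I)$; note that then $1\in F_0$. On $[0,C_1^{\alg(I)})$ the rates are constant, each $i\in F_0$ running at rate $R_i\coloneqq R_i^{\alg(I)}(0)$ (which equals $x^0_{(\mathrm B,i)}$ when $\lvert F_0\rvert>m$ and $1$ otherwise), so $C_1^{\alg(I)}=p_1/R_1$ and $Y_i(C_1^{\alg(I)})=R_i\,C_1^{\alg(I)}$ for $i\in F_0$. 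Define $I'$ on the jobs $2,\dots,n$ with $p'_j\coloneqq p_j-Y_j(C_1^{\alg(I)})$. Since the networks $D_t$ of~$I$ for $t\ge C_1^{\alg(I)}$ coincide with those of~$I'$ and the residual processing times match, \cref{alg2} on~$I$ behaves after time~$C_1^{\alg(I)}$ exactly as on~$I'$; hence $C_j^{\alg(I)}=C_1^{\alg(I)}+C_j^{\alg(I')}$ and $\sum_j w_jC_j^{\alg(I)}=C_1^{\alg(I)}+\sum_{j\ge2}w_jC_j^{\alg(I')}$. With the induction hypothesis $\sum_{j\ge2}w_jC_j^{\alg(I')}\le3\sum_{j\ge2}w_jC_j^{\opt(I')}$, it then suffices to prove the key inequality
\[
\sum_{j=1}^{n}w_jC_j^{\opt(I)}\;\ge\;\sum_{j\ge2}w_jC_j^{\opt(I')}+\tfrac13\,C_1^{\alg(I)} .
\]

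To prove this, I would fix an optimal schedule $\opt(I)$ and turn it into a feasible schedule $\mathrm S'$ for~$I'$ by deleting job~$1$, removing the last $Y_k(C_1^{\alg(I)})=R_k\,C_1^{\alg(I)}$ units of each $k\in F_0$, and left-shifting; optimality of $\opt(I')$ then gives $\sum_{j\ge2}w_jC_j^{\opt(I')}\le\sum_{j\ge2}w_jC_j^{\mathrm S'}$, so what remains is the bound $w_1C_1^{\opt(I)}+\sum_{j\ge2}w_j\bigl(C_j^{\opt(I)}-C_j^{\mathrm S'}\bigr)\ge\tfrac13\,C_1^{\alg(I)}$. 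Here the argument genuinely differs from the single-machine one: removed work that ran in parallel with a job~$j$ does not shorten~$C_j$, so each decrease $C_j^{\opt(I)}-C_j^{\mathrm S'}$, and likewise $C_1^{\opt(I)}$, has to be bounded below through the $m$-machine load bound; in particular a job~$k\in F_0$ with $C_k^{\opt(I)}\le C_j^{\opt(I)}$ contributes only about $\tfrac1m\,R_k\,C_1^{\alg(I)}$ to that decrease. Performing these estimates and reorganizing the resulting double sum over pairs $(j,k)$ reduces the key inequality to a purely combinatorial statement involving the numbers $R_k$ ($k\in F_0$), the weights, and the completion order of $\opt(I)$ --- roughly, a lower bound of the form $\sum_{k\in F_0}R_k\cdot w\bigl(\{\,j:C_j^{\opt(I)}\ge C_k^{\opt(I)}\,\}\bigr)\ge\tfrac{m}{3}$.

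This reduced statement is the main obstacle, and it is the only place where network flow theory enters. On a single machine the analogue is immediate because $R_i=w(T(i))$ and $\sum_iR_i=1$, so it is an instance of the elementary inequality $\sum_i v_i\sum_{\mathrm{ord}(k)\le\mathrm{ord}(i)}v_k\ge\tfrac12\bigl(\sum_iv_i\bigr)^2$ already used there. On $m$ machines the rates are not proportional to downstream weights, and one has to use the structure of the parametric minimum cut that defines the $R_k$: that $\sum_{k\in F_0}R_k=\min(\lvert F_0\rvert,m)$ and each $R_k\le1$, that the sink arc of each job~$\ell$ has capacity $w_\ell/\pi_t$ with $\pi_t\le1/m$ (so the flow absorbed at~$\ell$ is at most $w_\ell/\pi_t$), and --- crucially --- that a job through which the flow sends a large amount of downstream weight cannot be completed early by $\opt(I)$ without $\opt(I)$ paying for that weight. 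Turning this into the constant $\tfrac13$ (in place of the $\tfrac12$ obtained on one machine) is the crux; the loss from~$2$ to~$3$ is precisely the price of the less effective left-shift available on~$m$ machines. Once the key inequality is established, $\sum_jw_jC_j^{\alg(I)}=C_1^{\alg(I)}+\sum_{j\ge2}w_jC_j^{\alg(I')}\le3\bigl(\tfrac13 C_1^{\alg(I)}+\sum_{j\ge2}w_jC_j^{\opt(I')}\bigr)\le3\sum_jw_jC_j^{\opt(I)}$, closing the induction. Finally, the case $\lvert F_0\rvert\le m$ needs no flow argument: \cref{alg2} then runs every available job at rate~$1$, which is what any schedule does on the currently available jobs, and the reduction to~$I'$ loses nothing.
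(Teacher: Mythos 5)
There is a genuine gap, and it sits exactly where you say the ``crux'' is. Your plan transplants the single-machine induction directly: build $\mathrm S'$ from $\opt(I)$ on $m$ machines by shortening the jobs $k\in F_0$ and left-shifting, and then lower-bound $C_j^{\opt(I)}-C_j^{\mathrm S'}$ by roughly $\tfrac1m\sum_{k}Y_k(C_1^{\alg(I)})$ over the $k\in F_0$ finishing before $j$ in $\opt(I)$. That lower bound is false for genuine parallel schedules: removing work that ran on other machines need not allow $j$ to move earlier at all (e.g.\ if $j$ is processed continuously from time $0$ on its own machine, $C_j^{\mathrm S'}=C_j^{\opt(I)}$ no matter how much is deleted elsewhere), so the ``$m$-machine load bound'' gives an upper, not a lower, bound on the compression. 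The paper avoids precisely this by never comparing the shortened schedule to $\opt(I)$ itself: it introduces the relaxation $I_1$ (one machine of speed $m$, total rate $m$ arbitrarily divisible, inequality~\eqref{ineq:fast_single}), where deleting $Y_k$ units provably shifts every later completion by exactly $Y_k/m$, so the single-machine contraction argument goes through verbatim. Your second unresolved step --- the reduced statement $\sum_{k\in F_0}R_k\,w(\{j:C_j^{\opt(I)}\ge C_k^{\opt(I)}\})\ge m/3$ --- is stated as the main obstacle but never proved, so the induction does not close; and it is doubtful it can be pushed through in this form against $\opt(I)$, because the rates $R_k$ of \cref{alg2} carry no direct relation to the weights scheduled behind $k$ without the cut/flow identities the paper establishes (tightness of~\eqref{ineq:flow_weight} for active jobs, \eqref{eq:full_capacity}, \eqref{ineq:weighted_delay_ALG}, and the path-decomposition step).

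The paper's actual route also explains where the constant $3$ really comes from, which your sketch attributes to a ``less effective left-shift'': it splits $C_j^{\alg(I)}=\lambda_j^I+\mu_j^I$ into inactive and active time, where $j$ is active when $j$ or some predecessor runs at rate $<1$. The inactive time is charged, with factor $1$, to the longest chain ending at $j$, hence to $C_j^{\opt(I)}$; the active time is bounded by $2\sum_j w_jC_j^{\opt(I_1)}$ in \cref{lem:active-time}, by an induction that is structurally your induction but performed against $\opt(I_1)$, with the tree weights $w(T(i))$ of \cref{thm:non-clrv-single} replaced by $\tilde w_i=\pi_0R_i(0)$ and the inequality $w(A_0)\le\tilde w(F_0)$ extracted from the second minimum cut at the breakpoint $\pi_0$. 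Adding $1+2$ gives $3$. So if you want to salvage your outline, the two missing ingredients are exactly these: compare the shortened schedule to the fast-single-machine optimum rather than to $\opt(I)$, and separate out the inactive time (your case $\lvert F_0\rvert\le m$ is a symptom of it: there $A_0=\emptyset$ and nothing needs to be charged to the flow argument), paying for it separately via the chain lower bound.
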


In order to prove this theorem consider an arbitrary instance~$I$ and denote by $\alg(I)$ the schedule output by \cref{alg2} for $I$. We call a job~$j \in U_t$ \emph{active} at time~$t$ if one of its predecessors or $j$ itself is processed at a rate smaller than $1$, otherwise it is \emph{inactive}.
The set of active jobs at time~$t$ is denoted by $A_t$. Let $\mu_j^I$ be the total active time of job~$j$ in $\alg(I)$, and let $\lambda_j^I$ be the inactive time before its completion, i.e., $C_j^{\alg(I)} = \mu_j^I + \lambda_j^I$. Clearly, $\lambda_j^I$ is bounded by the maximum total processing time of a precedence chain ending in $j$, which is a lower bound on $C_j^{\opt(I)}$.

Let $I_1$ be the instance with the same job set but with a single machine that is $m$ times faster. We interpret this as still admitting a total processing rate of $m$, which, however, can now be divided arbitrarily among jobs. Then
\begin{equation}
 \sum_{j=1}^n w_j C_j^{\opt(I_1)} \le \sum_{j=1}^n w_j C_j^{\opt(I)} \label{ineq:fast_single}
\end{equation}
because the we relaxed the restriction that no job is processed at a rate larger than $1$. We now bound the total weighted active time.

\begin{lemma} \label{lem:active-time}
$\sum_{j=1}^n w_j \mu_j^I \le 2 \sum_{j=1}^n w_j C_j^{\opt(I_1)}$.
\end{lemma}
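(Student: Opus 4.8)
The plan is to prove the \lcnamecref{lem:active-time} by induction on the number~$n$ of jobs, in the same spirit as the proof of \cref{thm:non-clrv-single}. For $n=1$ the single job is processed at rate~$1$ throughout, so $\mu_1^I=0$ and nothing is to show. For $n>1$ let job~$1$ be the first job completed in $\alg(I)$ and put $\tau\coloneqq C_1^{\alg(I)}$. Since no job finishes before $\tau$, job~$1$ has no predecessors, so $1\in F_0$, and the processing rates are constant on $[0,\tau)$; in particular the set of active jobs is constantly $A_0$ there, $C_1^{\alg(I)}=\tau$, and $Y_k^{\alg(I)}(\tau)=R_k(0)\,\tau$ for $k\in F_0$. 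Let $I'$ be the instance on the jobs $2,\dots,n$ with processing times $p_j'\coloneqq p_j-Y_j^{\alg(I)}(\tau)$, and let $I_1'$ be the fast single machine belonging to $I'$. Since $\alg(I)$ behaves after time~$\tau$ exactly as $\alg(I')$, the quantity $\mu_j^I$ equals $\mu_j^{I'}$, augmented by $\tau$ for every $j\in A_0$, so that
\[\sum_{j=1}^n w_j\mu_j^I \;=\; \tau\,w(A_0)+\sum_{j=2}^n w_j\mu_j^{I'} \;\le\; \tau\,w(A_0)+2\sum_{j=2}^n w_j C_j^{\opt(I_1')}\]
by the induction hypothesis. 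It therefore suffices to show $\sum_{j=1}^n w_j C_j^{\opt(I_1)}\ge \sum_{j=2}^n w_j C_j^{\opt(I_1')}+\tfrac12\,\tau\,w(A_0)$.

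The next step transfers the comparison argument of \cref{thm:non-clrv-single} to the fast single machine. Fix an optimal schedule $\opt(I_1)$; since preemption does not help on a single machine for minimizing the total weighted completion time, we may assume that it processes the jobs one at a time at rate~$m$, in an order~$\sigma$ respecting the precedence constraints, so that $C_j^{\opt(I_1)}=\tfrac1m\sum_{k:\sigma(k)\le\sigma(j)}p_k$. Running the jobs $2,\dots,n$ in the order induced by~$\sigma$ with the shortened processing times~$p_j'$ yields a feasible schedule~$\mathrm S'$ for $I_1'$ with $C_j^{\opt(I_1)}-C_j^{\mathrm S'}=\tfrac1m\sum_{k\in F_0,\,\sigma(k)\le\sigma(j)}Y_k^{\alg(I)}(\tau)$ for $j\ge2$, using $Y_1^{\alg(I)}(\tau)=p_1$ and $p_k-p_k'=Y_k^{\alg(I)}(\tau)$ for $k\in F_0$; note that $\sigma(k)\le\sigma(j)$ is equivalent to $C_k^{\opt(I_1)}\le C_j^{\opt(I_1)}$. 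Combining $\sum_{j=2}^n w_j C_j^{\opt(I_1')}\le\sum_{j=2}^n w_j C_j^{\mathrm S'}$ with the above and with $C_1^{\opt(I_1)}\ge\tfrac1m\sum_{k\in F_0,\,\sigma(k)\le\sigma(1)}Y_k^{\alg(I)}(\tau)$, and writing $Y_k^{\alg(I)}(\tau)=R_k(0)\,\tau$, reduces the claim to
\[\sum_{j=1}^n w_j\!\!\!\sum_{\substack{k\in F_0\\ C_k^{\opt(I_1)}\le C_j^{\opt(I_1)}}}\!\!\! R_k(0)\;\ge\;\tfrac12\,m\,w(A_0).\]
If $|F_0|\le m$ this is trivial, since then every available job runs at rate~$1$, so $A_0=\emptyset$ and the right-hand side vanishes; so assume from now on $|F_0|>m$, i.e.\ $\alg$ uses the parametric flow at time~$0$.

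The remaining inequality is the flow-theoretic core. Decompose the maximum flow $x^0$ into $\mathrm A$-$\mathrm Z$-paths $\mathrm A\to\mathrm B\to i\to\dots\to j\to\mathrm Z$ and let $g_{ij}$ be the total flow on those paths that use the arc~$(\mathrm B,i)$ and reach $\mathrm Z$ via~$j$. Then $R_i(0)=\sum_j g_{ij}$, $\sum_i g_{ij}\le u^0_{(j,\mathrm Z)}=w_j/\pi_0$, $\sum_{i,j}g_{ij}=m$, and $g_{ij}>0$ forces $j\in S(i)$ and hence $C_j^{\opt(I_1)}\ge C_i^{\opt(I_1)}$. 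Enlarging the inner summation set accordingly and then using $\pi_0\sum_i g_{ij}\le w_j$ gives
\[\sum_{j=1}^n w_j\!\!\!\sum_{\substack{k\in F_0\\ C_k^{\opt(I_1)}\le C_j^{\opt(I_1)}}}\!\!\! R_k(0)\;\ge\;\pi_0\!\sum_{i\in F_0} R_i(0)\!\!\!\sum_{\substack{k\in F_0\\ C_k^{\opt(I_1)}\le C_i^{\opt(I_1)}}}\!\!\! R_k(0)\;\ge\;\tfrac{\pi_0}{2}\Bigl(\sum_{i\in F_0}R_i(0)\Bigr)^{\!2}=\tfrac12\,\pi_0\,m^2,\]
since the flow out of~$\mathrm B$ totals $m$. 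Hence it remains to prove $w(A_0)\le m\pi_0$, and this is the step I expect to be the main obstacle. My plan for it: by maximality of $\pi_0$ there is, besides $(\{\mathrm A\},\mathcal V_0\setminus\{\mathrm A\})$, a second minimum-capacity $\mathrm A$-$\mathrm Z$-cut with respect to $u^0(\pi_0)$; it must have~$\mathrm B$ on its source side, so it corresponds to a successor-closed set $T^\ast\subseteq U_0$ with $\lvert F_0\setminus T^\ast\rvert+w(T^\ast)/\pi_0=m$. Since $x^0$ saturates this cut, $R_i(0)=1$ for all $i\in F_0\setminus T^\ast$; consequently every active available job lies in $T^\ast$, whence $A_0=\bigcup_{i\in F_0,\,R_i(0)\in(0,1)}S(i)\subseteq T^\ast$, and $x^0$ also saturates $(j,\mathrm Z)$ for every $j\in A_0\subseteq T^\ast$. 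Finally, $A_0$ is successor-closed, so in $D_0$ the only arcs leaving $A_0$ end in~$\mathrm Z$; therefore the flow entering $A_0$ equals $\sum_{j\in A_0}w_j/\pi_0=w(A_0)/\pi_0$, and since flow never leaves $A_0$ again, this is at most the total flow value~$m$. Putting the last two displays together yields $\sum_{j=1}^n w_j C_j^{\opt(I_1)}-\sum_{j=2}^n w_j C_j^{\opt(I_1')}\ge\tfrac{\tau}{m}\cdot\tfrac12 m\,w(A_0)=\tfrac12\,\tau\,w(A_0)$, which closes the induction with overall factor exactly~$2$. Besides this inequality $w(A_0)\le m\pi_0$, the only other point requiring care is justifying the permutation form of $\opt(I_1)$ and the resulting identity for $C_j^{\opt(I_1)}-C_j^{\mathrm S'}$, which is routine once one views the fast single machine as a rescaling of an ordinary single machine.
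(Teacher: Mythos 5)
Your proof is correct and follows essentially the same route as the paper: induction on $n$ with the shortened instance $I'$, a comparison schedule $\mathrm S'$ obtained from $\opt(I_1)$, the path decomposition of $x^0$ linking the arcs $(\mathrm B,i)$ and $(j,\mathrm Z)$ to the completion order, and the second minimum cut at the breakpoint $\pi_0$ to get saturation for active jobs and the bound $w(A_0)\le \pi_0 m$ (the paper's inequality~\eqref{ineq:weighted_delay_ALG}, which you merely repackage via the source-side set $T^\ast$ and flow conservation on $A_0$). The only nitpick is that your characterization of $A_0$ should use $R_i(0)\in[0,1)$ rather than $(0,1)$, which does not affect the argument.
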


\begin{proof}
We prove the \lcnamecref{lem:active-time} by induction on the number of jobs. For a single job, $\mu_1^I = 0$, and the claim is clear. So consider an instance~$I$ with $n > 1$ jobs, and let w.l.o.g.\ job~$1$ be completed first in $\alg(I)$. As in \cref{non-clairvoyant single}, we write $Y_j(t) \coloneqq Y_j^{\alg(I)}(t)$ and $R_j(t) \coloneqq R_j^{\alg(I)}(t)$ for $t \ge 0$. We also consider the instances~$I'$ and $I_1'$ with jobs $2,\dotsc,n$ and processing times $p_j' \coloneqq p_j - Y_j(C_1^{\alg(I)})$ on $m$ parallel machines and on a single fast machine, respectively. Then for every $j \in \{2,\dotsc,n\}$ we have
\[\mu_j^I = \begin{cases*} C_1^{\alg(I)} + \mu_j^{I'} &if $j \in A_0$;\\ \mu_j^{I'} &else.\end{cases*}\]
Therefore,
\[
 \sum_{j=1}^n w_j \mu_j^I = \sum_{j \in A_0} w_j C_1^{\alg(I)} + \sum_{j=2}^n w_j \mu_j^{I'} \le w(A_0) C_1^{\alg(I)} + 2 \sum_{j=2}^n w_j C_j^{\opt(I'_1)},
\]
where the last inequality holds by induction. It remains to show that the right-hand side is bounded by $2 \sum_{j=1}^n w_j C_j^{\opt(I_1)}$.

If $\lvert F_0 \rvert \leq m$, then $A_0 = \emptyset$ and the claim is satisfied because increasing the processing times and adding a job cannot improve the optimal objective value. So assume from now on that ${\lvert F_0 \rvert > m}$. Then the algorithm computes the value~$\pi_0$ as well as flows and capacities $x^0 \le u^0$. Since $(\{\mathrm A\},\allowbreak {\mathcal V_t \setminus \{\mathrm A\}})$ is a minimum-capacity cut, it is fully saturated by the maximum flow $x^0$, meaning that $x^0$ has flow value~$m$. This implies that
\begin{equation}
 \sum_{i \in F_0} R_i(0) = \sum_{i \in F_0} x_{(\mathrm B, i)}^0 = m, \label{eq:full_capacity}
\end{equation}
i.e., the algorithm utilizes the total available processor capacity. Apart from the cut $(\{\mathrm A\}, \mathcal V_t \setminus \{\mathrm A\})$, there is another minimum-capacity $\mathrm A$-$\mathrm Z$-cut~$(\mathcal S, \mathcal V_t \setminus \mathcal S)$, which is crossed only by arcs from $\{\mathrm B\} \times F_0$ or $U_0 \times \{\mathrm Z\}$ because $\pi_0$ is a breakpoint of the minimum cut capacity function.
For every $j \in N$ we have
\begin{equation}
 x_{(j,\mathrm Z)}^0 \le u_{(j,\mathrm Z)}^0 = \frac{w_j}{\pi_0}. \label{ineq:flow_weight}
\end{equation}
This inequality is tight for all active jobs~$j \in A_0$ because either $j$ is available and processed with $R_j(0) < 1$ or there is an available predecessor~$i \in F_0$ with $R_i(0) < 1$. Since $x_{(\mathrm B, i)}^0 = R_i(0) < 1 = u_{(\mathrm B, i)}^0$, the arc $(\mathrm B, i)$ does not cross the cut $(\mathcal S, \mathcal V_t \setminus \mathcal S)$. Therefore, $j \in \mathcal S$, whence $(j,\mathrm Z)$ crosses the cut, implying that $x_{(j,\mathrm Z)}^0 = u^0_{(j,\mathrm Z)}$.

Let $\tilde{w}_i \coloneqq \pi_0 R_i(0)% \le R_k(0) \cdot \frac{w(N)}{m}
$ for all $i \in F_0$. Then
\begin{equation}
 Y_k(C_1^{\alg(I)}) = R_k(0) \, C_1^{\alg(I)} = \frac{\tilde w_k}{\pi_0} \, C_1^{\alg(I)} \label{eq:elapsed_time}
\end{equation}
for all $k \in F_0$. Moreover,
\begin{equation} w(A_0) = \sum_{j \in A_0} \pi_0 \, x_{(j,\mathrm Z)}^0 \le \sum_{i \in F_0} \pi_0 \, x_{(\mathrm B,i)}^0 = \sum_{i \in F_0} \pi_0 \, R_i(0) = \tilde w(F_0)\label{ineq:weighted_delay_ALG}\end{equation}
because all flow to some node~$j \in A_0$ has to pass an arc from $\{\mathrm B\} \times F_0$.

Now consider a fixed optimal schedule $\opt(I_1)$ for $I_1$. By shortening the jobs, removing job~$1$, and contracting occurring idle times, we obtain a feasible schedule~$\mathrm S'$ for $I_1'$. Then
\begin{align}
 \sum_{j=1}^n w_j C_j^{\opt(I_1)} &= w_1 C_1^{\opt(I_1)} + \sum_{j=2}^n w_j (C_j^{\mathrm S'} + C_j^{\opt(I_1)} - C_j^{\mathrm S'}) \notag \\
 &\ge \sum_{j=2}^n w_j C_j^{\mathrm S'} + \sum_{j=1}^n w_j \sum_{\substack{k \in F_0\\ C_k^{\opt(I_1)} \le C_j^{\opt(I_1)}}} \frac{Y_k(C_1^{\alg(I)})}{m} \notag \\
 &\stackrel{\eqref{ineq:flow_weight}}\ge \sum_{j=2}^n w_j C_j^{\mathrm S'} + \pi_0 \sum_{j=1}^n x_{(j,\mathrm Z)}^0 \sum_{\substack{k \in F_0\\ C_k^{\opt(I_1)} \le C_j^{\opt(I_1)}}} \frac{Y_k(C_1^{\alg(I)})}{m}. \label{umformung}
\end{align}
The flow~$x^0$ can be decomposed into path flows $x_P^0$ for $\mathrm A$-$\mathrm Z$-paths $P$ in $D_0$ such that $\sum_{P \ni a} x_{P}^0 = x_a^0$ for all $a \in \mathcal A_0$.
Using the path decomposition, we can express the flow on the arcs $(\mathrm B, i)$, $i \in F_0$, and $(j, \mathrm Z)$, $j \in U_0$, as
\begin{align*}
  x_{(\mathrm B, i)}^0 &= \sum_{P \ni (\mathrm B, i)} x_P^0 = \sum_{j=1}^n \sum_{P \ni  (\mathrm B,i), (j,\mathrm Z)} x_P^0
& \text{and} &&
  x_{(j,\mathrm Z)}^0 &= \sum_{P \ni (j,\mathrm Z)} x_P^0 = \sum_{i \in F_0}\sum_{P \ni (\mathrm B,i), (j,\mathrm Z)} x_P^0.
\end{align*}
We use this in order to bound the second sum in~\eqref{umformung} and obtain
\begin{align*}
 \sum_{j=1}^n  w_j C_j^{\opt(I_1)} &\geq
 \sum_{j=2}^n w_j C_j^{\mathrm S'} + \pi_0
 \sum_{j=1}^n \sum_{i \in F_0}\sum_{P \ni (\mathrm B,i), (j,\mathrm Z)} x_P^0 \sum_{\substack{k \in F_0\\ C_k^{\opt(I_1)} \le C_j^{\opt(I_1)}}} \frac{Y_k(C_1^{\alg(I)})}{m} \\
 &\ge
 \sum_{j=2}^n w_j C_j^{\mathrm S'} + \pi_0
 \sum_{i \in F_0} \sum_{j=1}^n \sum_{P \ni  (\mathrm B,i), (j,\mathrm Z)} x_P^0 \sum_{\substack{k \in F_0\\ C_k^{\opt(I_1)} \le C_i^{\opt(I_1)}}} \frac{Y_k(C_1^{\alg(I)})}{m}.
\end{align*}
In the inequality we used that every path $P$ in $D_0$ containing $i \in F_0$ and $j \in N$ corresponds to a precedence chain from job~$i$ to job~$j$. Therefore, if such a path exists, $C_i^{\opt(I_1)} \le C_j^{\opt(I_1)}$, and thus, the set of jobs~$k \in F_0$ with $C_k^{\opt(I_1)} \le C_i^{\opt(I_1)}$ is included in the set of jobs~$k \in F_0$ completed before time~$C_j^{\opt(I_1)}$.
  Finally, we compute
\begin{align*}
 \sum_{j=1}^n w_j C_j^{\opt(I_1)} &\ge \sum_{j=2}^n w_j C_j^{\mathrm S'} + \pi_0 \sum_{i \in F_0} x_{(\mathrm B, i)}^0 \sum_{\substack{k \in F_0\\ C_k^{\opt(I_1)} \le C_i^{\opt(I_1)}}} \frac{Y_k(C_1^{\alg(I)})}{m} \\
 &\stackrel{\mathclap{\eqref{eq:elapsed_time}}}= \sum_{j=2}^n w_j C_j^{\mathrm S'} + \sum_{i \in F_0} \tilde w_i \sum_{\substack{k \in F_0\\ C_k^{\opt(I_1)} \le C_i^{\opt(I_1)}}} \tilde w_k \frac{C_1^{\alg(I)}}{\pi_0 m} \\
 &\ge \sum_{j=2}^n w_j C_j^{\mathrm S'} + \frac 1 2 \, \tilde w(F_0)^2 \, \frac{C_1^{\alg(I)}}{\pi_0 \, m} \\
 &\stackrel{\mathclap{\eqref{eq:full_capacity}}}= \sum_{j=2}^n w_j C_j^{\mathrm S'} + \frac 1 2 \, \tilde w(F_0) \, C_1^{\alg(I)} \\
 &\stackrel{\mathclap{\eqref{ineq:weighted_delay_ALG}}}\ge \sum_{j=2}^n w_j C_j^{\opt(I_1')} + \frac 1 2 \, w(A_0) \, C_1^{\alg(I)}. \qedhere
\end{align*}
\end{proof}

Using the \lcnamecref{lem:active-time}, we obtain
\begin{align*}
 \sum_{j=1}^n w_j C_j^{\alg(I)} &= \sum_{j=1}^n w_j \lambda_j + \sum_{j=1}^n w_j \mu_j \\
 &\le \sum_{j=1}^n w_j C_j^{\opt(I)} + 2 \sum_{j=1}^n w_j C_j^{\opt(I_1)} \stackrel{\eqref{ineq:fast_single}}\le 3 \sum_{j=1}^n w_j C_j^{\opt(I)},
\end{align*}
concluding the proof of \cref{thm:non-clrv-P}.

\section{Clairvoyant Approximation Algorithms} \label{approximations}

When processing times are known in advance, the non-clairvoyant algorithms described above can be simulated in order to compute a \emph{virtual} preemptive schedule. On a single machine this virtual schedule can be transformed to a schedule without preemptions, while on identical parallel machines the number of preemptions can be reduced to $O(n^2)$. Both transformations do not increase the completion time of any job.

\paragraph{Non-preemptive scheduling on a single machine}

On a single machine, after computing the virtual preemptive schedule, we can simply perform list scheduling in the order of the virtual completion times. Since the preemptive schedule is not actually executed, we also have to compute its completion times (instead of observing them in the non-clairvoyant setting). This is specified in \cref{alg3}.
\begin{algorithm}
 \caption{Clairvoyant scheduling with precedence constraints on a single machine}
 \label{alg3}
 \begin{algorithmic}
  \State Initialize $t \gets 0$, $U \gets N$, $W \gets \sum_{j \in N} w_j$, and $Y_j \gets 0$ for all $j \in N$.
  \State Perform depth-first search and store total weights~$W(j)$ of subtrees rooted at each \Comment{$O(n^2)$}
  \Statex\hspace{\algorithmicindent}node~$j \in N$.
  \While{$U \neq \emptyset$,} \Comment{$n$ iterations}
   \State let $F$ be the jobs from $U$ without predecessor in $U$;
   \ForAll{$i \in F$} \Comment{$O(n)$ iterations}
    \State let $R_i \gets \frac{W(i)}{W}$;
    \State let $\tau_i \gets \frac{p_i-Y_i}{R_i}$;
   \EndFor
   \State let $j \gets \argmin_{i \in F} \tau_i$; \Comment{$O(n)$}
   \State set $Y_i \gets Y_i + R_i \cdot \tau_j$ for all $i \in F$;
   \State update $t \gets t + \tau_j$, $U \gets U \setminus \{j\}$, and $W \gets W - w_j$;
   \State set $C_j' \gets t$.
  \EndWhile
 \State Perform list scheduling in order $C_j'$. \Comment{$O(n)$}
\end{algorithmic}
\end{algorithm}

\begin{example}
  \begin{figure}[h]
  \centering
  \begin{tikzpicture}[xscale=.7, yscale=1.2]
   \fill[fill=firstcolor, job] (0,0) rectangle node {$1$} (6,1);
   \fill[fill=thirdcolor, job] (6,0) rectangle node {$3$} +(3,1);
   \fill[fill=secondcolor, job] (9,0) rectangle node {$2$} +(4,1);
   \fill[fill=fourthcolor, job] (13,0) rectangle node {$4$} +(5,1);
  \end{tikzpicture}
  \caption{Schedule obtained by list scheduling in order of the completion times from the schedule in \cref{exa:single machine non-clairvoyant}}
  \label{fig:schedule non-preemptive}
 \end{figure}
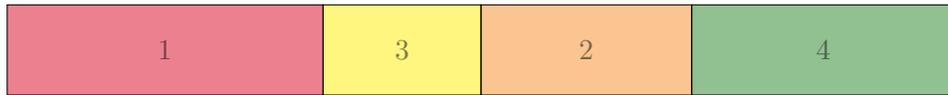
 The schedule resulting from \cref{alg3} for the instance from \cref{exa:single machine non-clairvoyant} is illustrated in \cref{fig:schedule non-preemptive}.
 Its total weighted completion time is $\sum_{j=1}^4 w_j C_j^{\alg} = 6 + 9 + 2 \cdot 13 + 18 = 59$. An optimal schedule processes the jobs in the order $3, 1, 2, 4$ and has objective value $\sum_{j=1}^4 w_j C_j^{\opt} = 3 + 9 + 2 \cdot 13 + 18 = 56$. The example demonstrates that the schedule resulting from our approximation algorithm is not consistent with a Sidney decomposition, in contrast to (almost) all previously known $2$-approximation algorithms~\cite{CS05}.
\end{example}

\begin{corollary}
 \Cref{alg3} is a $2$-approximation algorithm that runs in time $O(n^2)$.
\end{corollary}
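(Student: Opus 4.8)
The plan is to establish the two assertions of the corollary separately: the approximation ratio and the running time.

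For the approximation ratio I would first argue that the \textbf{while}-loop of \cref{alg3} is a faithful simulation of \cref{alg1}. In each iteration the set~$F$ equals the set~$F_t$ of available jobs at the current virtual time~$t$, the rate~$R_i$ equals $W(i)/W = w(T(i))/w(U_t)$, and the quantity~$\tau_j = \min_{i\in F}\tau_i$ is exactly the time until the next job completion in the preemptive schedule built by \cref{alg1}. Hence the values~$C_j'$ produced by \cref{alg3} coincide with the completion times~$C_j^{\alg(I)}$ of that virtual schedule, which by \cref{thm:non-clrv-single} satisfy $\sum_j w_j C_j^{\alg(I)} \le 2\sum_j w_j C_j^{\opt(I)}$. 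It therefore suffices to show that the final list-scheduling step does not increase any completion time, i.e., that the non-preemptive schedule it returns completes every job~$j$ at or before time~$C_j'$.

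This is the crux, and I would split it into a feasibility part and a volume part. Feasibility: ordering the jobs by nondecreasing~$C_j'$, with ties broken according to a fixed topological order of~$(N,A)$, yields a linear extension of the precedence relation, because in the virtual schedule a job receives rate~$0$ until all of its predecessors have completed, so $C_k'\le C_j'$ whenever $(k,j)\in A$; thus list scheduling in this order is well defined. Volume: the rates assigned by \cref{alg1} sum to $\sum_{i\in F_t} w(T(i))/w(U_t)=1$ whenever $U_t\neq\emptyset$, so the virtual machine is never idle while jobs remain; consequently the total amount of processing performed on all jobs during~$[0,C_j']$ is exactly~$C_j'$, and since every job~$k$ with $C_k'\le C_j'$ is already finished by time~$C_j'$ we get $\sum_{k:\,C_k'\le C_j'}p_k\le C_j'$. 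In the list schedule, job~$j$ completes no later than $\sum_{k:\,C_k'\le C_j'}p_k$, hence no later than~$C_j'$. Combining, $\sum_j w_j C_j^{\alg}\le\sum_j w_j C_j'=\sum_j w_j C_j^{\alg(I)}\le 2\sum_j w_j C_j^{\opt(I)}$.

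The running time is read directly off the annotated pseudocode: the depth-first search together with the computation of the subtree weights~$W(j)$ costs $O(n^2)$ (the crude bound $\lvert A\rvert=O(n^2)$ suffices); the \textbf{while}-loop executes $n$ iterations, each recomputing~$F$, the rates~$R_i$, the thresholds~$\tau_i$, and the minimizer~$j$ in $O(n)$ time; and the concluding list-scheduling step is $O(n)$. Altogether this is $O(n^2)$. I expect the only nonroutine step to be the ``never idle'' observation combined with the volume inequality $\sum_{k:\,C_k'\le C_j'}p_k\le C_j'$; everything else is simulation bookkeeping and counting, and degenerate cases such as zero processing times are harmless since such a job completes at the virtual time of its last predecessor and occupies no room in the list schedule.
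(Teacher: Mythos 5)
Your proposal is correct and follows essentially the same route as the paper: show that the values $C_j'$ are the completion times of the virtual schedule covered by \cref{thm:non-clrv-single}, use the never-idle (unit total rate) argument to get $\sum_{k:\,C_k'\le C_j'} p_k \le C_j'$ and hence $C_j^{\alg}\le C_j'$ for the list schedule, and count $n$ iterations of $O(n)$ work plus $O(n)$ list scheduling. You merely spell out some steps the paper leaves implicit (simulation fidelity, that the order of the $C_j'$ is a linear extension, tie-breaking), which is fine.
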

\begin{proof}
 In this proof we omit the instance from our notations. We assume w.l.o.g.\ that the jobs are scheduled in the order $1,\dotsc,n$ in the last step of the algorithm, so that $C_1^{\alg} < \cdots < C_n^{\alg}$. Then for every job~$j \in N$ we have $C_j^\alg = \sum_{k \le j} p_k \le C_j'$ because all jobs~$k$ with $k \le j$ have been processed to completion in the virtual schedule by time~$C_j'$. Hence, \[\sum_{j=1}^n w_j C_j^{\alg} \le \sum_{j=1}^n w_j C_j' \stackrel{\ref{thm:non-clrv-single}}\le 2 \sum_{j=1}^n w_j C_j^{\opt}.\]

 Since in every iteration, one job is removed from $U$, the while-loop has at most $n$ iterations. In each iteration a linear number of elementary operations is performed to compute the rates and to find the job finished next, so that the total number of steps in the loop is in $O(n^2)$. Also list scheduling can be done in linear time, whence the total number of operations is bounded by a polynomial in $O(n^2)$.
\end{proof}

Note that this bound on the number of elementary operations does not immediately imply strongly polynomial running time because it also has to be ensured that the encoding length of the numbers occurring in the computation remains polynomially bounded. This will be shown in \cref{apx:encoding_length}.

\paragraph{Finitely many preemptions on identical parallel machines}

On identical parallel machines, we apply \citeauthor{McN59}'s~\cite{McN59} wrap-around rule to each piece of the virtual schedule between two consecutive completion times. This is formalized in \cref{alg4}.

\begin{algorithm}[t]
 \caption{Clairvoyant scheduling with precedence constraints on identical parallel machines}
 \label{alg4}
 \begin{algorithmic}
  \LComment{Compute virtual schedule}
  \State Initialize $t \gets 0$, $U \gets N$, and $Y_j(t) \gets 0$ for all $j \in N$.
  \While{$U \neq \emptyset$} \Comment{$n$ iterations}
   \State let $F_t$ be the jobs from $U$ without predecessor in $U$;
   \State apply \cref{alg2} to the graph~$D_t$ and the parametric capacities~$u^{t}$ defined  \Comment{$O(n^3)$}
   \Statex\hspace{\algorithmicindent}in \cref{non-clairvoyant parallel} to obtain $R_i(t)$ for all $i \in F_t$;
   \State set $\tau_i \gets \frac{p_i - Y_i(t)}{R_i(t)}$ for all $i \in F_t$; \Comment{$O(n)$}
   \State let $j \gets \argmin_{i \in F_t} \tau_i$; \Comment{$O(n)$}
   \State set $Y_i(t+\tau_j) \gets Y_i(t) + R_t(t) \tau_j$ for all $i \in F_t$ and $Y_k(t+\tau_j) \gets Y_k(t)$ for $k \in N \setminus F_t$; \Comment{$O(n)$}
   \State update $t \gets t + \tau_j$ and $U \gets U \setminus \{j\}$;
   \State set $C_j' \gets t$.
  \EndWhile
  \LComment{Compute actual schedule (\citeauthor{McN59}'s wrap around rule)}
  \State Order the jobs so that $C_1' \le \cdots \le C_n'$.
  \State Reset $t \gets 0$.
  \For{$j=1,\dotsc,n$} \Comment{$n$ iterations}
   \State set $i \gets 1$, $u \gets t$, $k \gets j$, and $\Delta_k \gets Y_k(C_j') - Y_k(t)$;
   \While{$k \le n$} \Comment{$\le 2n$ iterations}
    \If{$u + \Delta_k \le C_j'$}
     \State process job~$k$ on machine~$i$ from time $u$ to time $u+\Delta_k$;
     \State replace $u \gets u + \Delta_k$;
     \State increment $k \gets k + 1$;
     \If{$k \le n$} set $\Delta_k \gets Y_k(C_j') - Y_k(t)$; \EndIf
    \Else
     \State process job~$k$ on machine~$i$ from time $u$ to time $C_j'$;
     \State set $\Delta_k \gets u + \Delta_k - C_j'$;
     \State increment $i \gets i + 1$;
     \State set $u \gets t$;
    \EndIf
   \EndWhile
   \State set $t \gets C_j'$.
  \EndFor
 \end{algorithmic}
\end{algorithm}

\begin{example}
 \begin{figure}[h]
 \centering
 \begin{tikzpicture}[xscale=.7, yscale=1.2/2]
   \node[black, anchor=east] at (0, 2.5) {$i = 1$};
   \node[black, anchor=east] at (0, 1.5) {$i = 2$};
   \node[black, anchor=east] at (0, 0.5) {$i = 3$};
   \fill[fill=firstcolor, job] (0,2) rectangle node {$1$} (9,3);
   \fill[fill=secondcolor, job] (0,1) rectangle node {$2$} (6,2);
   \fill[fill=fourthcolor, job] (6,1) rectangle node {$4$} (9,2);
   \fill[fill=fourthcolor, job] (0,0) rectangle node {$4$} (6,1);
   \fill[fill=thirdcolor, job] (6,0) rectangle node {$3$} (9,1);
   \fill[fill=secondcolor, job] (9,2) rectangle node {$2$} (12,3);
   \fill[fill=fourthcolor, job] (9,1) rectangle node {$4$} (12,2);
   \fill[fill=thirdcolor, job] (9,0) rectangle node {$3$} (12,1);
   \fill[fill=fifthcolor, job] (12,1) rectangle node {$5$} (18,2);
   \fill[fill=thirdcolor, job] (12,2) rectangle node {$3$} (18,3);
   \fill[fill=fifthcolor, job] (18,2) rectangle node {$5$} (21,3);
   \fill[fill=sixthcolor, job] (18,1) rectangle node {$6$} (21,2);
   \end{tikzpicture}
 \caption{Schedule obtained by \cref{alg4} for the instance from \cref{exa:non-clairvoyant}}
 \label{fig:schedule mcnaughton}
\end{figure}
The schedule resulting from \cref{alg4} for the instance from \cref{exa:non-clairvoyant} is illustrated in \cref{fig:schedule mcnaughton}.
The virtual completion times~$C'_j$ computed in the first part are exactly the completion times from \cref{exa:non-clairvoyant}. In the second part of the \lcnamecref{alg4}, \citeauthor{McN59}'s wrap around rule considers the jobs in order of their virtual completion times, i.e., in the order $(1, 2, 4, 3, 5, 6)$.
\end{example}

\begin{corollary}
 \Cref{alg4} is a $3$-approximation ratio introducing at most $O(n^2)$ preemptions. It runs in time $O(n^4)$.
\end{corollary}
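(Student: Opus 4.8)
The plan is to treat the three assertions separately. For the \emph{performance guarantee}, the first step is to note that the first part of \cref{alg4} merely simulates \cref{alg2}: between two consecutive virtual completion times the set~$U$ of unfinished jobs, and hence the network~$D_t$, the parametric capacities, and the rates~$R_i(t)$ returned by \cref{alg2}, are all constant, so $\tau_i = (p_i - Y_i)/R_i$ is exactly the time until job~$i$ would finish at its current rate, and $j = \argmin_i \tau_i$ is the next job to complete. Hence the virtual completion times~$C_j'$ produced by \cref{alg4} coincide with the job completion times of the non-clairvoyant strategy analyzed in \cref{thm:non-clrv-P}, so $\sum_{j} w_j C_j' \le 3 \sum_j w_j C_j^{\opt}$.

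It then remains to show that the actual schedule~$\alg$ output by the wrap-around phase satisfies $C_j^{\alg} \le C_j'$ for every job~$j$. I would argue this slab by slab, where the $j$-th slab is $[C_{j-1}', C_j')$ of length $\ell_j$ (with $C_0' \coloneqq 0$), maintaining the inductive invariant that at the start of slab~$j$ the actual schedule has processed every job~$k$ for exactly the amount~$Y_k(C_{j-1}')$ it had been processed in the virtual schedule. For the step, observe that every job~$k$ that the virtual schedule runs during slab~$j$ is available throughout the slab, so all its predecessors are completed by $C_{j-1}'$; in particular, the jobs run during one slab form an antichain, and by the invariant their predecessors are also already completed in the actual schedule, so no precedence constraint is violated. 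The amount $\Delta_k = Y_k(C_j') - Y_k(C_{j-1}')$ to be processed during the slab satisfies $\Delta_k \le \ell_j$ since $R_k(t) \le 1$, and $\sum_k \Delta_k \le m\,\ell_j$ since the total rate is at most~$m$; these are exactly the conditions for \citeauthor{McN59}'s wrap-around rule~\cite{McN59} to place all the pieces within the slab on the $m$ machines without ever running a job on two machines at once, processing each job~$k$ for precisely $\Delta_k$ units. This re-establishes the invariant for slab~$j+1$, and since job~$j$ is processed to completion by the end of slab~$j$, we get $C_j^{\alg} \le C_j'$. Combined with the first step, this yields $\sum_j w_j C_j^{\alg} \le 3 \sum_j w_j C_j^{\opt}$.

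For the \emph{number of preemptions}, within each slab the wrap-around rule creates at most $\min(m,n)-1$ points where a job is split between two machines, and at the slab boundary at most one migration per job occurs when its processing resumes on machine~$1$; over the $n$ slabs this totals $O(n^2)$. For the \emph{running time}, the while-loop of the first part runs $n$ times, and each iteration is dominated by the call to \cref{alg2}, i.e., a single parametric minimum-cut computation with the algorithm of \textcite{GGT89} on~$D_t$, which has $O(n)$ nodes and $O(n^2)$ arcs and thus takes $O(n^3)$ time; the remaining per-iteration work is $O(n)$. The wrap-around phase sorts the~$C_j'$ in $O(n\log n)$ time and then runs $n$ outer iterations with $O(n)$ work each. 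This gives the overall bound $O(n^4)$; that the encoding lengths of all numbers occurring in the computation remain polynomially bounded, so that this is genuinely strongly polynomial, is deferred to \cref{apx:encoding_length}.

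The hard part will be the second paragraph: verifying that assembling the schedule slab by slab with \citeauthor{McN59}'s rule yields a \emph{globally} feasible schedule. The delicate points are that precedence constraints are respected both inside a slab (the antichain observation) and across slab boundaries (the invariant that the actual and virtual schedules agree on the processed amount of every job at every boundary), and that no job is ever run simultaneously on two machines, which relies precisely on the bound $\Delta_k \le \ell_j$.
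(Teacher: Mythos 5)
Your proposal is correct and follows essentially the same route as the paper: the approximation factor comes from \cref{thm:non-clrv-P} via $C_j^{\alg}\le C_j'$, the preemption count is at most a constant number per unfinished job per slab (giving $O(n^2)$), and the running time is dominated by $n$ calls to the $O(n^3)$ parametric flow algorithm of \textcite{GGT89}. The only difference is that you spell out the slab-by-slab feasibility of \citeauthor{McN59}'s rule (the bounds $\Delta_k \le \ell_j$ and $\sum_k \Delta_k \le m\,\ell_j$, and the precedence/invariant argument), which the paper leaves implicit; this is a welcome elaboration, not a different argument.
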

\begin{proof}
 The approximation factor follows from \cref{thm:non-clrv-P} because $C_j^{\alg} \le C_j'$ for all $j \in N$. Between any two consecutive virtual completion times, we preempt every unfinished job at most twice: once if it is wrapped around and once at the end of the interval. Hence, the number of preemptions can be bounded by $2n^2$.

 At the beginning and for the first $n-1$ virtual job completions, we apply the algorithm of \textcite{GGT89} to a network with $O(n)$ nodes and $O(n^2)$ arcs. This can be done in time $O(n |A| \log(n^2/|A|)) \subseteq O(n^3)$. This yields a total running time in $O(n^4)$. Clearly this dominates all other steps of the algorithm.
\end{proof}

In \cref{apx:encoding_length} a bound on the encoding length of the appearing numbers is proved, implying that the algorithm runs in strongly polynomial time.

\section{Conclusion}

For single machine scheduling with precedence constraints, under UGC no approximation ratio better than $2$ can be achieved within polynomial time. On the other hand, for non-clairvoyant scheduling, even without precedence constraints, no performance guarantee below $2$ is possible, no matter how much computation time is allowed. Our algorithm shows that, when allowing infinitesimal preemptions, these two bounds can be reached simultaneously, i.e., there is a $2$-competitive efficient non-clairvoyant algorithm. In other words, assuming UGC, the problem is so hard that knowing the processing times is of no use when polynomial running time is required. This is in contrast to many scheduling problems without precedence constraints, which also face a lower bound of $2$ for any non-clairvoyant algorithm but can be solved in polynomial time or admit a PTAS in the clairvoyant setting~\cite{Smi56,ABC+99}.

Also for preemptive scheduling on identical parallel machines, the presented algorithm has the best known performance guarantee of any approximation algorithm and of any non-clairvoyant algorithm, although no matching (conditional) lower bounds exist in this case. So it remains open to determine the exact approximability as well as the best possible competitive ratio of a non-clairvoyant algorithm.

% ---- Bibliography ----
\printbibliography

\appendix

\section{Polynomial Encoding Length} \label{apx:encoding_length}

As mentioned in the introduction, if \cref{alg2} is applied to a single machine instance, it will return the same processing rates as \cref{alg1}. Thus, also the virtual schedules computed in \cref{alg3,alg4} coincide on a single machine. Consequently, it suffices to show that the encoding length of numbers occurring in \cref{alg4} is polynomially bounded.

\begin{lemma}
 The binary encoding length of all numbers occurring in \cref{alg4} bounded by a polynomial in the input encoding length.
\end{lemma}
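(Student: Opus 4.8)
The plan is to trace through \cref{alg4} step by step, bounding the bit-length of every quantity that gets stored, and to reduce everything to the single nontrivial ingredient: the numbers produced by the parametric maximum-flow computation of \textcite{GGT89}. First I would observe that the input consists of the integers (or rationals) $p_j$, $w_j$ and the arc set $A$, with total encoding length $L$; w.l.o.g.\ the $p_j$ and $w_j$ are integers. The quantities that threaten to blow up are the processing rates $R_i(t)$, the elapsed times $Y_i(t)$, the virtual completion times $C_j'$, the residual processing times $p_i - Y_i(t)$, and the step lengths $\tau_i$. Everything else (loop counters, machine indices, the set $U$) is of trivially polynomial size.

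The first key step is to control the rates $R_i(t)=x^t_{(\mathrm B,i)}$ coming from \cref{alg2}. In the easy branch $\lvert F_t\rvert\le m$ the rates are all $1$, so there is nothing to show. In the hard branch the rates are components of a maximum flow in $D_t$ with respect to capacities $u^t=u^t(\pi_t)$, where $\pi_t=1/\lambda_{\max}$ and $\lambda_{\max}$ is the largest breakpoint of the piecewise-linear minimum-cut-capacity function for linearly parametrized capacities $u_{(j,\mathrm Z)}(\lambda)=w_j\lambda$, $u_{(\mathrm A,\mathrm B)}=m$, $u_{(\mathrm B,i)}=1$. The standard fact I would invoke is that each breakpoint of this function is the ratio of two subdeterminants of the (integer) constraint matrix, hence a rational with numerator and denominator bounded by $2^{\mathrm{poly}(L)}$ in absolute value; this is exactly the content of the analysis of \textcite{GGT89} combined with Cramer's rule, and it already guarantees that $\pi_t$ has polynomial encoding length. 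Given $\pi_t$, the capacities $u^t$ are rationals of polynomial size, so a maximum flow can be chosen to be a vertex of the (polynomially described) flow polytope, and its components are again ratios of subdeterminants of a polynomial-size integer matrix — hence polynomially encoded. This shows that every $R_i(t)$ appearing in the algorithm is a rational of polynomial bit-length, say at most $\beta=\mathrm{poly}(L)$.

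The second key step is to propagate this through the $n$ iterations of the while-loop. Here I would argue by induction on the iteration count: if at the start of an iteration every $Y_i(t)$ and the current time $t$ have encoding length at most $\gamma_k=\mathrm{poly}(L)$ (with $\gamma_0$ trivial), then $p_i-Y_i(t)$ has length at most $\gamma_k+O(L)$, the step length $\tau_j=(p_j-Y_j(t))/R_j(t)$ is a quotient of two numbers of length $O(\gamma_k+\beta)$ and hence has length $O(\gamma_k+\beta)$, and the updated values $Y_i(t+\tau_j)=Y_i(t)+R_i(t)\tau_j$ and $t+\tau_j$ have length $O(\gamma_k+\beta)$ as well. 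Since there are only $n$ iterations and each one inflates the bound by an additive $O(\beta+L)$ together with a constant multiplicative factor coming from a single addition/multiplication of two rationals, the final bound $\gamma_n$ is still polynomial in $L$ — the crucial point being that the depth of arithmetic is $O(n)$, not exponential. In particular every virtual completion time $C_j'$ is polynomially encoded. Finally, the wrap-around part of \cref{alg4} only forms differences $Y_k(C_j')-Y_k(t)$ and partial sums thereof and compares them to $C_j'$, so all numbers there are of the form (sum of at most $n$ previously computed polynomially-encoded numbers), again of polynomial length. Collecting the three steps yields the claim.

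The main obstacle, and the place where genuine care is needed rather than bookkeeping, is the first step: pinning down that the breakpoints $\lambda$ of the parametric-minimum-cut function — and the extreme maximum flows at a fixed parameter — are rationals of polynomially bounded encoding length. I would handle this by noting that a breakpoint is determined by the equality of two affine functions $a\lambda+b=c\lambda+d$ whose coefficients are $0/1/m$-combinations of the $w_j$, so $\lambda=(d-b)/(a-c)$ with numerator and denominator polynomially bounded; and that for fixed rational capacities an extreme point of the max-flow polytope is a basic solution of a linear system with a polynomial-size integer coefficient matrix and a polynomial-size rational right-hand side, so Cramer's rule gives the bound. One subtlety to address explicitly is that \citeauthor{GGT89}'s algorithm may internally represent the parameter symbolically and only needs $\lambda_{\max}$ at the end, so it suffices to bound that single output value and then one subsequent max-flow computation, rather than every intermediate quantity of their procedure.
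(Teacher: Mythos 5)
Your first step (bounding $\pi_t$ and the chosen maximum flow, hence every rate $R_i(t)$, by a polynomial in the \emph{original} input) is sound and matches the paper's key observation: the network handed to \cref{alg2} depends only on the remaining job set, the weights, the precedence arcs and $m$ --- never on the clock time $t$ or the elapsed amounts $Y_k(t)$ --- so the rates are rationals $\upsilon_{jk}/\alpha_j$ of polynomial encoding length. The genuine gap is in your propagation step. The recurrence you actually set up is $\gamma_{k+1}\le c\,\gamma_k+O(\beta+L)$ with $c\ge 2$: adding two rationals with unrelated denominators (e.g.\ $Y_i(t)$ and $R_i(t)\tau_j$) can roughly double the encoding length, since the new denominator is (up to cancellation you have not established) the product of the two old ones. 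Iterated over $n$ iterations this gives $\gamma_n=\Omega(2^n)$, which is exponential, not polynomial; ``the depth of arithmetic is $O(n)$'' does not rescue this, and indeed this is exactly the subtlety the lemma exists to handle (the paper flags it explicitly after the single-machine corollary).

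What is missing is a structural reason why the denominators do \emph{not} compound multiplicatively. The paper supplies it: one proves by induction that $t_j$ and all $Y_k(t_j)$ are integer multiples of $1/\prod_{i=1}^{j-1}\upsilon_{ii}$, where $R_j(t_j)=\upsilon_{jj}/\alpha_j$ is the rate of the job finishing in iteration $j$. Thus all quantities of one iteration share a \emph{common} denominator, and per iteration its length grows only additively by $\log\upsilon_{jj}$, for a total of at most $n\log(\max_i\upsilon_{ii})$, which is polynomial; the numerators are then bounded because every $t_j$ and $Y_k(t_j)$ lies in $[0,\sum_j p_j]$. (The update $Y_k(t_{j+1})=Y_k(t_j)+R_k(t_j)\tau_j$ with $\tau_j=(p_j-Y_j(t_j))/R_j(t_j)$ shows the new denominator divides $\upsilon_{jj}\cdot\prod_{i<j}\upsilon_{ii}$, since $R_k(t_j)/R_j(t_j)=\upsilon_{jk}/\upsilon_{jj}$.) Without such a common-denominator (or explicit cancellation) argument, your induction does not close, so as written the proposal does not establish the lemma; with it, the rest of your outline (the easy branch $|F_t|\le m$, and the wrap-around phase, which only takes differences and partial sums of already-bounded numbers) goes through as in the paper.
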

\begin{proof}
 We assume w.l.o.g.\ that the input is integral and that the jobs are scheduled in the order $1,\dotsc,n$ in the last step of the algorithm, so that $C_1^{\alg} < \cdots < C_n^{\alg}$.

 For every iteration~$j$ in which $j$ is completed in the virtual schedule let $t_j$ be the value of $t$ at the beginning of the iteration, i.e., $t_1 = 0$, $t_2 = C_1'$ etc. If $|F_{t_j}| \le m$, all $R_k(t_j)$, $k \in F_{t_j}$, are set to one, so that only additions and subtractions take place in iteration~$j$. If $|F_{t_j}| > m$, \cref{alg2} simply invokes the smallest breakpoint algorithm by \textcite{GGT89}. Since this runs in strongly polynomial time, the results have polynomial encoding length with respect to the input. The crucial observation here is that the input passed to \cref{alg2} only consists of a subset of jobs with weights and precedence constraints, as well as the number of machines, which are all original input parameters for \cref{alg4}. In particular, it does not include the current clock time~$t_j$ or the elapsed processing times~$Y_k(t_j)$ of jobs $k \in N$. Therefore, all $\pi_{t_j}$ are of the form $\alpha_j/\rho_j$ with $\alpha_j, \rho_j \in \mathbb Z$, whose encoding length is polynomial in the original input. Since this leads to arc capacities from the set $1/\alpha_j \cdot \mathbb Z$, we can assume that also all flow values of $x^{t_j}$ are integer multiples of $1/\alpha_j$, i.e., there are $\upsilon_{jk} \in \mathbb Z$ of polynomial encoding length such that $R_k(t_j) = x_{(\mathrm B, k)}^{t_j} = \upsilon_{jk}/\alpha_j$ for $j \in N$. We show by induction on $j$ that all $t_j$ and $Y_k(t_j)$ are integer multiples of $1/(\prod_{i=1}^{j-1} \upsilon_{ii})$. The claim is true for $j = 1$, as all $Y_k$ and $t$ are zero. Now assume the claim is known for some $j < n$. The next values $t_{j+1}$ and $Y_k(t_{j+1})$, $k \in F_{t_j}$, are computed in iteration~$j$ as
 \begin{align*}
  t_{j+1} &= t_j + \tau_j = t_j + \frac{1}{R_j(t_j)} \cdot (p_j - Y_j(t_j)) = t_j + \frac{\alpha_j}{\upsilon_{jj}} \cdot (p_j - Y_j(t_j)) \in \frac{1}{\prod_{i=1}^{j} \upsilon_{ii}} \cdot \mathbb Z, \\
  Y_k(t_{j+1}) &= Y_k(t_j) + R_k(t_j) \cdot \tau_j = \frac{R_k(t_j)}{R_j(t_j)} \cdot (p_j - Y_j(t_j)) = \frac{\upsilon_{jk}}{\upsilon_{jj}} \cdot (p_j - Y_j(t_j)) \in \frac{1}{\prod_{i=1}^{j} \upsilon_{ii}} \cdot \mathbb Z,
 \end{align*}
 while the $Y_k(t_{j+1})$ with $k \in N \setminus F_{t_j}$ stay unchanged.
 Since $\prod_{i=1}^n \upsilon_{ii} \le (\max_{i=1}^n \upsilon_{ii})^n$, the denominators of all $t_j$ and $Y_k(t_j)$ have encoding length bounded by $n \log (\max_i \upsilon_{ii})$, which is polynomially bounded. Since $t_j$ and $Y_k(t_j)$ always take values between $0$ and $\sum p_j$, their numerators are bounded by $(\sum_{j=1}^n p_j) \cdot \prod_{i=1}^{n} \upsilon_{ii}$. The other numbers~$R_i(t_j)$ and $\tau_i$, occurring in the virtual schedule computation, are set in every iteration to the result of a new arithmetic expression in the input and the values of $t_j$ and $Y_i(t_j)$, so that also their encoding length stays polynomially bounded. Finally, this also holds for the wrap-around part, in which additions and subtractions are performed.
\end{proof}

\end{document}